\newtheorem{notation}{Notation}
\newcommand{\appendixLocation}{appendix (Def. \ref{semantics})}
\newcommand{\appendixLocationTwo}{Lemma \ref{test_conversion} in the appendix}
\newcommand{\fullproofLocation}{See \hyperref[proof:prAtEnd\pratendcountercurrent]{full proof} on page~\pageref{proof:prAtEnd\pratendcountercurrent}.}
\newcommand{\rref}[2][]{\prettyref{#2}}
\newcommand{\dGLsc}{\text{\upshape\textsf{d{\kern-0.05em}G{\kern-0.15em}$\mathcal{L}_{sc}$}}\xspace}
\newcommand{\dDL}[1][]
{\text{\upshape\textsf{d{\kern-0.1em}$\mathcal{L}$}}\xspace}
\newcommand{\dGL}[1][]
{\text{\upshape\textsf{d{\kern-0.05em}G{\kern-0.15em}$\mathcal{L}$}}\xspace}
\newcommand{\angeldGLA}[3]{\langle #1 \rangle(#2\textcolor{blue}{\mathbf{,}}\, #3)}
\newcommand{\demondGLA}[3]{[#1](#2\textcolor{blue}{\mathbf{,}}\, #3)}
\newcommand{\angeldGL}[2]{\langle #1 \rangle #2}
\newcommand{\demondGL}[2]{[#1]#2}
\NewDocumentCommand\angel{mmg}{
    \IfNoValueTF{#3}{\angeldGL{#1}{#2}}{\angeldGLA{#1}{#2}{#3}}
}
\NewDocumentCommand\demon{mmg}{
    \IfNoValueTF{#3}{\demondGL{#1}{#2}}{\demondGLA{#1}{#2}{#3}}
}
\newcommand{\winr}[1]{\llbracket #1 \rrbracket}
\newcommand{\winrAdglA}[3]{\varsigma_{#1}(#2, #3)}
\newcommand{\winrAdgl}[2]{\varsigma_{#1}(#2)}
\newcommand{\winrDdglA}[3]{\delta_{#1}(#2, #3)}
\newcommand{\winrDdgl}[2]{\delta_{#1}(#2)}
\NewDocumentCommand\winrA{mmg}{
    \IfNoValueTF{#3}{\winrAdgl{#1}{#2}}{\winrAdglA{#1}{#2}{#3}}
}
\NewDocumentCommand\winrD{mmg}{
    \IfNoValueTF{#3}{\winrDdgl{#1}{#2}}{\winrDdglA{#1}{#2}{#3}}
}
\newcommand{\C}{\mathbf{C}}
\newcommand{\s}{\mathcal{S}}
\newcommand{\asA}{\langle := \rangle}
\newcommand{\asD}{[:=]}
\newcommand{\conA}{\langle ' \rangle}
\newcommand{\conD}{[']}
\newcommand{\tA}{\langle ? \rangle}
\newcommand{\tD}{[?]}
\newcommand{\chA}{\langle \cup \rangle}
\newcommand{\chD}{[\cup]}
\newcommand{\seqA}{\langle {;} \rangle}
\newcommand{\seqD}{[{;}]}
\newcommand{\dA}{\langle^d\rangle}
\newcommand{\dD}{[^d]}
\newcommand{\rA}{\langle^*\rangle}
\newcommand{\rD}{[^*]}
\newcommand{\FP}{\text{FP}}
\newcommand{\ind}{\text{ind}}
\newcommand{\AD}{\langle \land ]}
\newcommand{\impAD}{\langle\land]_2}
\newcommand{\MA}{\text{M}\langle\rangle}
\newcommand{\MAtwo}{\text{M}_2\langle\rangle}
\newcommand{\detR}{\text{det}}
\newcommand{\spA}{\text{S}\langle\rangle}
\newcommand{\re}{\text{re}\langle\rangle}
\newcommand{\reD}{\text{re}[]}
\newcommand{\id}{\text{id}^{-d}}
\newcommand{\FPtwo}{\text{FP}_2}
\newcommand{\cuswidth}{1.1cm}
\begin{document}
\title{Semi-Competitive Differential Game Logic}
\author{Julia Butte\orcidID{0009-0003-5066-8412} \and Andr\'e Platzer\orcidID{0000-0001-7238-5710}}
\institute{
  Karlsruhe Institute of Technology, Karlsruhe, Germany
  \email{$\{$julia.butte$\|$platzer$\}$@kit.edu}
}

\maketitle
\begin{abstract}
    This paper introduces \emph{semi-competitive differential game lo\-gic} \dGLsc, which enables verification of safety-critical applications that involve interactions between two agents. In \dGLsc, these interactions are specified as games on hybrid systems with two players that \emph{may} collaborate with each other when helpful and \emph{may} compete when necessary.
    The players in the hybrid games of \dGLsc have individual goals that may overlap, leading to nonzero-sum games. 
    This makes \dGLsc especially well-suited for verifying situations where players, e.g., share safety objectives but otherwise pursue different goals, so that zero-sum assumptions lead to overly conservative results.
    Additionally, \dGLsc solves the subtlety that even though each player may benefit from knowledge of the other player's goals, e.g., concerning shared safety objectives, unsafe situations might still occur if every player were to mutually assume the other player would act to avoid unsafety.
    The syntax and semantics, as well as a sound and relatively complete proof calculus are presented for \dGLsc.
    The relationship between \dGLsc and zero-sum differential game logic \dGL is discussed and the purpose of \dGLsc illustrated in a canonical example.
    \keywords{Differential game logic \and Hybrid systems \and Hybrid games \and Nonzero-sum games \and Cooperative games \and Competitive games}
\end{abstract}

\section{Introduction}

The safety of cyber-physical systems (CPS) is of significant interest to avoid damage to persons and goods due to faulty programs \cite{Platzer2018,DBLP:journals/tcad/KabraMP22,DBLP:journals/tecs/CleavelandMP23}.
Cyber-physical systems include trains, planes, robots and autonomous cars \cite{DBLP:journals/tcad/KabraMP22,Lam2017}.
Particularly challenging are situations involving two CPSs, due to their possible interactions which frequently occur in aerial collision avoidance or steering of autonomous cars.
To verify, for example, two planes on a collision course, different approaches can be used: The situation could be regarded as one hybrid system, and its safety verified using differential dynamic logic (\dDL) \cite{DBLP:journals/jar/Platzer08,Platzer10,DBLP:journals/jar/Platzer17,Platzer2018,DBLP:journals/jacm/PlatzerT20,DBLP:conf/tacas/JeanninGKGSZP15}. 
This corresponds to all planes being centrally controlled, which is infeasible for larger numbers of planes.
Or the situation can be modeled with games, regarding each plane as a player which represents the fact that planes are normally steered independently. 
Collision avoidance could then be verified assuming adversarial players using differential game logic (\dGL) \cite{DBLP:journals/tecs/CleavelandMP23, Platzer15}.
This assumption results in a zero-sum game, i.e. exactly one player wins.
But if one pilot is trying to avoid a collision, the intruder pilot does \emph{not} in fact have the opposite goal of causing a collision!
Flying under a zero-sum assumption gives safe but unnecessarily conservative results.

Instead, many games are non-zero-sum, where players have individual goals which often overlap.
The planes, for example, both do not want to crash, but they still want to fly into different directions.
Such partially shared goals open up the possibility of cooperating, which leads to new winning strategies.
Alas, the possibility and will to cooperate is no guarantee that everything goes smoothly.
Misunderstandings might still happen, leading for example, to unintentional crashes.

To verify the safety of these two-player non-zero sum games, this paper introduces the \emph{semi-competitive differential game logic} (\dGLsc) which offers a natural way of expressing and reasoning about non-zero sum hybrid games.
\dGLsc supports players that behave \emph{semi-competitively}, i.e., both players have individual goals and are open for cooperation but will compete if necessary.
They both try to reach their goal while being aware of the other player's goal and can collaborate to jointly meet their respective goals.
This strategy provides a solution to the challenging question of when players should cooperate. The possibility of cooperation adds more winning strategies thus enabling safety proofs for situations that are unsafe under conservative zero-sum game assumptions. 

To underline the semantics' suitability, the paper derives an alternative representation of semi-competitive games in \dGL. Theoretically, this is possible, since all mixed (co)inductive concepts are definable in \dGL \cite{Platzer15}. But as \dGL is fundamentally designed for zero-sum, this requires manual coding, causes redundancies, and duplicates verification effort.

The structure of this paper is as follows: First, Section \ref{related_work} relates and compares relevant logics to \dGLsc. After that, the logic \dGL whose correspondence to \dGLsc  will be proved later on, is reviewed in Section \ref{preliminaries}. Then, Section \ref{main} introduces the notion of semi-competitiveness and defines the syntax and semantics of \dGLsc. Section \ref{properties} establishes important properties of \dGLsc and its relation to \dGL. Afterward, a proof calculus for \dGLsc is introduced in Section \ref{proof_calculus} and its soundness and relative completeness proven. 

\section{Related Work}
\label{related_work}

Non-zero sum games is a wide field of study that has been addressed by various communities. Game theory provides fundamental definitions of non-zero sum games and studies of their equilibria \cite{vonNeumannMorgenstern+2004, Nash50, nash2024} which lay the basis for our work. In terms of game theory, the games played in \dGLsc are non-zero sum sequential games with perfect information and binary payoffs. Whether a player can reach their goal can be considered by backward induction in \dGLsc, similar to the determination of a subgame perfect equilibrium \cite{Bielefeld1988}. But \dGLsc has the advantage that the backward induction is done without game trees, which might have uncountable, infinite breadth and unbounded depth due to continuous dynamics and consequently would be hard to handle.

In the field of synthesis, multi-player non-zero sum games played on a graph have been investigated by Fisman et al. who developed rational synthesis \cite{Fisman2010}. Although the assumptions about the players' behavior are very similar to \dGLsc, their work pursues a different aim: Rational synthesis can be used to construct correct systems, while \dGLsc instead can be used to verify already existing systems. 

In purely discrete settings, non-zero sum games have already been addressed by Chaterjee et al. who introduced strategy logic \cite{CHATTERJEE2010677}. This is a logic that allows modelling two-player non-zero sum games played on a graph. As explicit quantifications over strategies are possible, the logic is more powerful than e.g. alternating-time temporal logic (ATL) \cite{Alur2002} or propositional game logic \cite{parikh85} but also more complex. Therefore, it is better suited for theoretical investigations than for practical use, which \dGLsc supports. Additionally, continuous dynamics cannot be expressed in strategy logic. 

For a hybrid setting, there exist rectangular hybrid games \cite{Henzinger19999} or STORMED games \cite{VLADIMEROU20116770} which both model players with hybrid automata. While STORMED games are able to express more varieties of continuous dynamics than rectangular hybrid games, it is possible for both of them to perform model checking and controller synthesis. But unlike \dGLsc, both of these hybrid games assume the players to act adversarially. Another work on hybrid games which also assumes adversarial players, by Mitchell et al. \cite{Mitchell2005}, computes backwards reachable sets for hybrid games which resemble winning regions in \dGLsc. Unlike \dGLsc, they do not use a comparatively simple state-based semantics, but instead compute these sets by numerically solving Hamilton-Jacobi-Isaac partial differential equations. 

Other logics dealing with multi-player settings are e.g., coalition logic \cite{pauly2002}, ATL \cite{Alur2002} or stochastic game logic \cite{Baier2007}. These logics are able to verify that a coalition of arbitrarily many players is able to reach a certain goal. The game that is played is defined in the semantics. That has the disadvantage that validity demands a formula to be true for all games, which is only the case for very general statements. Our logic includes the game's definition in the formula, so that safety guarantees can be made for a specific game. Furthermore, the referenced works \cite{pauly2002, Alur2002, Baier2007} only consider reachability of outcomes, whereas our work includes goals for all players.

Agotnes et al. \cite{agotnes2006} include a preference relation in their logic but keep reachability of outcomes and their preferability separate which leaves information unused. Moreover, games are not composable whereas our logic allows modularizing games so that proofs can be reused. Additionally, the players in our work take each other's goals into account to improve their strategies.

Closest to our work are the propositional game logic developed by Parikh \cite{parikh85} which addresses zero-sum two-player games, and differential game logic \dGL by Platzer \cite{Platzer15} which extends game logic to include continuous dynamics. Relations to the $\mu$-calculus are discussed in prior work \cite{DBLP:conf/lics/AbouElWafaP24}. Our work takes these logics a step further by allowing non-zero sum games instead.

\section{Preliminaries}
\label{preliminaries}

For better understanding of the content of the paper, the logic \dGL \cite{Platzer15} will be recalled briefly in this section. In \dGL, there are two players called \emph{Angel} and \emph{Demon}. These players play a \emph{hybrid game} which is specified as part of a logical formula. In the following, syntax, semantics and a proof calculus for \dGL will briefly be explained, based on the book \cite{Platzer2018}.

The syntax of \dGL is    based on first-order logic. Additionally, there are two modalities $\angel{\alpha}{P}$ and $\demon{\alpha}{P}$. The first one states that Angel can win the game $\alpha$ by achieving her goal $P$. The second modality means that Demon can win game $\alpha$ by reaching his goal $P$. Formally, the syntax is defined as follows:

    \begin{table}[tbhp]
    \caption{Hybrid games}
    \label{tab:games}
    \begin{tabularx}{\textwidth}{|l| l |X|}
        \hline
        \textbf{Game} & \textbf{Name} & \textbf{Meaning}\\
        \hline
        $x:=e$ & Assignment game & assigns $e$ to $x$\\
        $x'=f(x)\&Q$ & Continuous game & Angel evolves ordinary differential equation (ODE) to change value of $x$ while evolution domain constraint $Q$ has to hold\\
        $?Q$ & Test game & tests if Angel fulfills $Q$, if not, she loses and Demon wins automatically\\
        $\alpha \cup \beta$ & Choice game & Angel chooses to play either $\alpha$ or $\beta$\\
        $\alpha; \beta$ & Sequential game & $\alpha$ and $\beta$ are played sequentially\\
        $\alpha^d$ & Dual game & controls in $\alpha$ are swapped between Angel and Demon\\
        $\alpha^*$ & Repetition game & $\alpha$ is played repeatedly until Angel wants to stop after finite rounds\\
        \hline 
    \end{tabularx}
    \end{table}

\begin{definition}[\dGL Syntax]
  Formulas of \dGL are defined by the grammar
    \[P,Q ::= e\geq \Tilde{e} \;|\; \lnot P \;|\; P\land Q \;|\; \forall x P \;|\; \exists x P \;|\; \angel{\alpha}{P} \;|\; \demon{\alpha}{P}\]
    where $x$ is a variable, $e, \tilde{e}$ terms, $P,Q$ formulas and $\alpha$ a hybrid game (\rref{tab:games}).
\end{definition}
The formulas are interpreted over states. Each state is a function $\omega: \mathcal{V} \to \mathbb{R}$ which assigns a real value to each variable. The variable values in state $\omega^r_x$ are the same as in $\omega$, except for $x$ whose value is $r$. The semantics is defined as a function $\winr{\cdot}: Fml \to \mathcal{P}(\s)$ which returns the states where a formula is true. The semantics for the first-order formulas is as usual. The semantics for the two modalities is defined using two functions that retrace from which states a player must have started to reach their goal at the end of the game.

\begin{definition}{\emph{(Semantics)}}
    The \dGL semantics is:
    \begin{itemize}
        \item $\winr{\angel{\alpha}{P}} = \winrA{\alpha}{\winr{P}}$
        \item $\winr{\demon{\alpha}{Q}} = \winrD{\alpha}{\winr{Q}}$
    \end{itemize}
    Angel's function for her winning region $\varsigma_\alpha(\cdot)$ is defined as follows:
    \begin{itemize}
        \item $\winrA{x:=e}{X} = \{\omega\in \s \;|\; \omega_x^{\omega\winr{e}} \in X\}$
    \item $\winrA{x'=f(x)\&Q}{X} = \{\varphi(0) \in \s \;|\; \varphi(r)\in X \text{ for some } r \geq 0 \text{ and (differentiable) }\allowbreak \varphi:[0,r] \to  \s \text{ such that } \varphi(s) \in \winr{Q} \text{ and } \frac{d\varphi(t)(x)}{dt}(s) = \varphi(s)\winr{f(x)} \text{ for all }\allowbreak 0 \leq s \leq r\} \stackrel{\text{def}}{=} \{\varphi(0) \in \s \;|\; \varphi(r)\in X \text{ for some } r \geq 0 \text{ with }\varphi \models x'=f(x) \land Q\}$
        \item $\winrA{?Q}{X} = \winr{Q} \cap X$
        \item $\winrA{\alpha\cup \beta}{X} = \winrA{\alpha}{X}\cup \winrA{\beta}{X}$
        \item $\winrA{\alpha;\beta}{X} = \winrA{\alpha}{\winrA{\beta}{X}}$
        \item $\winrA{\alpha^d}{X} = \winrA{\alpha}{X^\C}^\C$
        \item $\winrA{\alpha^*}{X} = \bigcap\{Z\subseteq \s\;|\; X\cup\winrA{\alpha}{Z} \subseteq Z\}$
    \end{itemize}
    Demon's winning region is defined by $\winrD{\alpha}{X} = \winrA{\alpha}{X^\C}^\C$ (\cite[Th. 3.1]{Platzer15}), i.e., Demon wins whenever Angel fails to reach the opposite of his goal, because he wins whenever Angel loses.
\end{definition}
The proof calculus for \dGL shown in Table \ref{tab:dgl_calculus}, consists of all proof rules for first-order logic and one axiom for each game in Angel's modality. Additionally, there is also a determinacy axiom that links Angel's modality to Demon's. All axioms for Demon's modality can be derived using this axiom. Furthermore, the calculus contains a monotonicity rule and a fixpoint rule that handles repetition games. 
\renewcommand{\arraystretch}{1.1}
\begin{table}[htb]
    \caption{\dGL proof calculus}
    \label{tab:dgl_calculus}
    \begin{tabularx}{\textwidth}{>{\color{gray}}l X >{\color{gray}}l l}
    $[\cdot]$ & $\demon{\alpha}{P} \leftrightarrow \lnot \angel{\alpha}{\lnot P}$ & $\langle \cup \rangle$ & $\angel{\alpha \cup \beta}{P} \leftrightarrow \angel{\alpha}{P} \lor \angel{\beta}{P}$ \\
    $\langle := \rangle$ & $\angel{x:=e}{p(x)} \leftrightarrow p(e)$ & $\langle {;} \rangle$ & $\angel{\alpha;\beta}{P} \leftrightarrow \angel{\alpha}{\angel{\beta{P}}}$ \\
    $\langle ' \rangle$ &  $\angel{x'=f(x)}{P} \leftrightarrow \exists t{\geq}0\, \angel{x:=y(t)}{P}$ \hskip 7pt $(y'= f(y))$ & $\langle^d\rangle$ & $\angel{\alpha^d}{P} \leftrightarrow \lnot \angel{\alpha}{\lnot P}$ \\
    $\langle?\rangle$ & $\angel{?Q}{P} \leftrightarrow Q \land P$ & $\langle^*\rangle$ & $\angel{\alpha^*}{P} \leftrightarrow P \lor \angel{\alpha}{\angel{\alpha^*}{P}}$\\
    M &
    \AxiomC{$P \rightarrow Q$}
    \UnaryInfC{$\angel{\alpha}{P} \rightarrow \angel{\alpha}{Q}$}
    \DisplayProof 
     & FP &
     \AxiomC{$P \lor \angel{\alpha}{Q} \rightarrow Q$}
    \UnaryInfC{$\angel{\alpha^*}{P} \rightarrow Q$}
    \DisplayProof
\end{tabularx}
\end{table}
\section{Semi-Competitive Differential Game Logic}
\label{main}
The logic \dGLsc supports two players called \emph{Angel} and \emph{Demon}. 
They semi-competitively play a game defined in the \dGLsc formula.

After the game has ended, \emph{both} want to have achieved their goal. Each player has a separate goal which is independent of the other player's goal. The game is \emph{non-zero sum}, both players can win or lose at the same time.
\subsection{Semi-Competitiveness}
To describe the player's behavior in the logic, the notion of ``semi-competitive\-ness'' is introduced in this paper. 
\emph{Semi-competitiveness means that players cooperate where it is helpful for them, and compete where it is necessary}.
More precisely, a player will only choose options that help them reach their goal. They will never lose on purpose. If there are multiple options that all make the player win, the player will \emph{always} choose an option that also helps the other player win. If no option makes the player win, they will not altruistically help the other player instead. In terms of game theory, the players play a cooperative game, if collaboration is possible, and a non-cooperative game otherwise.

This gameplay is chosen to ensure the greatest cooperation among players, especially in games with multiple steps. Here, uncooperative behavior in a previous step might have negative consequences later. For example, if Angel and Demon pick candy for each other and Angel gives Demon a lemon candy, although he wanted a strawberry candy, he might in turn give Angel a candy she does not want because he already lost the game and will not help Angel anymore. If, on the other hand, Angel indeed gives him a strawberry candy, Demon will give her the candy she wants because now all options make him win, so he will help Angel. 

Additionally, this gameplay ensures monotonicity of the logic, i.e., a greater goal means more possibilities to win. Monotonicity ensures that the logic behaves intuitively. Furthermore, it guarantees the existence of fixpoints which are needed later on to define parts of the semantics. Helping other players is a necessary assumption to obtain monotonicity: If the players' behavior does not also help the other player win and players choose randomly between all options they are indifferent to, then monotonicity is lost.

\subsection{Syntax}
The syntax of \dGLsc consists of first-order logic and two modalities. The modality $\angel{\alpha}{P}{Q}$ expresses that Angel has a winning strategy to win the game $\alpha$ by achieving her goal $P$ while knowing that Demon's goal is $Q$. $\demon{\alpha}{P}{Q}$ means that Demon has a winning strategy  to achieve $Q$ by playing $\alpha$ while knowing that Angel's goal is $P$.
\begin{definition}[\dGLsc Syntax]
    \label{def:dglsc_syntax}
The following grammar defines the \dGLsc formulas:
\begin{itemize}
    \item $\alpha, \beta ::= x:=e \;|\; x'=f(x) \& Q \;|\; ?Q \;|\; \alpha \cup \beta \;|\; \alpha ; \beta \;|\; \alpha^d \;|\; \alpha ^*$
    \item $P, Q ::= e \geq \Tilde{e} \;|\; \lnot P \;|\; P \land Q \;|\; \forall x P \;|\; \exists x P \;|\; \angel{\alpha}{P}{Q} \;|\; [\alpha](P,Q)$
\end{itemize}
where $\alpha, \beta$ are hybrid games, $P, Q$ are \dGLsc formulas, $ x$ is a variable, $f$ is a function and $e, \Tilde{e}$ are terms.
\end{definition}
The hybrid games in Definition \ref{def:dglsc_syntax} have the same effects as the ones for \dGL described in Table \ref{tab:games}. So in the game $((x:=x+1 \cup x:=x-1); \{x'= -1\}^d)^*$, for example, the outermost game is a repetition game. Here, Angel chooses after each round, if she wants to play again. She might also choose to play 0 rounds. If she does play a round, Angel first chooses whether to increase or to decrease $x$ by one. After that follows a continuous game. As it is marked with $^d$, this game is under Demon's control, so he chooses how long to evolve the ODE to decrease $x$. That means, time passes and $x$ changes according to the ODE until Demon stops time. In this example, the new value of $x$ would be $x-t$ after time $t$. Then, Angel chooses again whether she wants to play another round or not. 

The main difference now to \dGL is not in the effects of games but rather in the behavior of the players: If we have $\angel{(x:=1 \cup x:= 0)^d}{x=1}$ in \dGL, Demon will choose an option that hinders Angel at achieving her goal due to his adversarial behavior, i.e., $x:= 0$ here. In \dGLsc the formula could be $\angel{(x:=1 \cup x:= 0)^d}{x=1}{\top}$, including the information that Demon's goal is \emph{true}. As Demon can win in any case, Demon helps Angel by choosing $x:=1$ due to his semi-competitive behavior. 

\begin{example}
    \label{kayak}
    As the verification of an aerial collision avoidance system is a fully-fledged case study, the two planes in this example are still standing on the ground.
    Angel and Demon are replenishing the stocks for their respective plane, but Angel is missing 3l of orange juice while Demon needs 5l of tomato juice (more would not fit in the tank, less and some passengers will stay thirsty). 
    Fortunately, Angel has plenty of tomato juice and Demon's stock is full of orange juice (enough, so that they do not have to worry about running out).
    In \dGLsc, filling up each other's stock can be modeled using two continuous games:
    \[o=0 \land t=0 \rightarrow \angel{\{t'=1\}; \{o'=1\}^d}{o=3}{t=5}
    \]
    First, Angel, at a constant rate, fills tomato juice into Demon's tank.
    Then, Demon fills orange juice into her tank.
    Note how the goals of both players are visible at the end of the formula.
    Although these goals are not directly overlapping, the players' semi-competitive behavior ensures their cooperation which leads to them both winning the game. 
    This would not have been possible if they played adversarially because they have no control over the variable that matters for their goal.
    Even with indifference towards Demon's goal, Angel could not have won the game.
    If she would not care about the amount of tomato juice she gave to Demon, Demon would probably have refused to cooperate with her, making her lose as well.
\end{example}
\subsection{Semantics}
The semantics of the formulas is defined in the following section. 
This is done via a function $\winr{\cdot}: \textit{Fml} \to \mathcal{P}(\s)$ that maps a formula to the set of states where the formula is true. 
$\s$ is the set of all states. A state is a mapping $\omega: \mathcal{V} \to \mathbb{R}$ that maps all variables to a real number. 
The state $\omega_x^r$ denotes a state that coincides with the state $\omega$ in all variable values, except for $x$ whose value is replaced by $r$.

The semantics of first-order operators is as expected. 
To define the semantics of Angel's and Demon's modality, the functions $\winrA{\alpha}{\cdot}{\cdot}$ and $\winrD{\alpha}{\cdot}{\cdot}$ resp., are used which describe the winning region of a game. 
These functions take two inputs each: The first one is Angel's goal and the second is Demon's goal. Since Angel and Demon know each other's goal, this knowledge is needed for determining the winning regions.
Unlike in \dGL, Demon's semantics cannot be defined as a function of Angel's winning region $\winrA{\alpha}{\cdot}{\cdot}$ anymore, because their winning regions are no longer complementary. 
Instead, their winning regions partially overlap and there are also states that belong to neither winning region. 

Still, the semantics of the assignment, test, and sequential game is similar to \dGL as the effects of the game are the same, and the players' decisions cannot differ from those they would make in \dGL because these games do not involve choices.
A little tweak for the semantics of the sequential game is needed nonetheless, as the winning regions in \dGLsc take two arguments instead of one. 
To win the sequential game $\alpha;\beta$, the player has to reach their winning region for $\beta$ after $\alpha$, so they can reach their goal after $\beta$. The same applies for the other player except their respective winning region operator is needed, instead, so the semantics of the sequential game is:
\[\winrA{\alpha;\beta}{X}{Y} = \winrA{\alpha}{\winrA{\beta}{X}{Y}}{\winrD{\beta}{X}{Y}} \text{ and } \winrD{\alpha;\beta}{X}{Y} = \winrD{\alpha}{\winrA{\beta}{X}{Y}}{\winrD{\beta}{X}{Y}}\]

For Angel, the semantics of the continuous game and the choice game also correspond to the semantics in \dGL.
The winning region only returns states in which Angel \emph{has} an option so that she can win but which one she will \emph{actually} choose is invisible.
Locally, only her goal matters to her because she has control which is also why she cannot expect help from Demon in this step. Globally, semi-competitiveness comes into play through nested games, e.g., the games Angel chooses from in the choice game. Especially important here is the dual game because in this game Demon's goals matter for Angel which then inductively matter for Angel through the whole game.

Demon's semantics of the continuous game and choice game does differ from the one in \dGL though. 
His winning regions are extended by states where Angel, who is in control, will help him. As they act semi-competitively, Angel will help him in states where both of their goals can be reached simultaneously. 
Therefore, Demon's semantics for the continuous game \(\winrD{x'=f(x)\&Q}{X}{Y}\) is:
\[\begin{aligned}[t]
    &\{\varphi(0) \in \s\;|\; \varphi(r) \in Y \text{ for all } r \text{ with } \varphi \models x'=f(x) \land Q\}\\
    &\cup \begin{aligned}[t]
        \{\varphi(0) \in \s \;|\; \varphi(r) \in X \cap Y ~
        \text{ for some } r \text{ with } \varphi \models x'=f(x) \land Q\}&
    \end{aligned}
\end{aligned}\]
The definition of the semantics uses a function from time to states $\varphi$. 
Over the course of this function, $x$ changes such that $x' = f(x)$ and the evolution domain constraint $Q$ holds in every state in $\varphi$. 
If Demon stays in his goal $Y$ for all states in $\varphi$, he can definitely win the game (left-hand side of the union). 
If Angel can reach a state by evolving the ODE (i.e. there is a state in $\varphi$) where both of their goals are fulfilled, she will stop there because she behaves semi-competitively, making both of them win (right-hand side of the union).
The definition for the semantics of the choice game follows a corresponding idea:
\begin{multline*}\winrD{\alpha\cup \beta}{X}{Y} = (\winrD{\alpha}{X}{Y} \cap \winrD{\beta}{X}{Y}) \cup (\winrD{\alpha}{X}{Y} \cap \winrA{\alpha}{X}{Y})\\ \cup (\winrD{\beta}{X}{Y} \cap \winrA{\beta}{X}{Y})
\end{multline*}
If Demon can win both $\alpha$ and $\beta$, i.e. he is in the winning region of $\alpha$ and $\beta$, he can definitely win the game.
If he and Angel can both win $\alpha$, i.e. the current state is part of both Angel's and Demon's winning region for $\alpha$, Demon can also win the game because Angel will choose $\alpha$ due to her semi-competitive behavior.
A similar argument applies, if both of them can win $\beta$.

In the dual game, the controls between Angel and Demon are swapped. 
This is simulated in the semantics by letting Angel keep control but making her play for Demon, i.e. trying to achieve his goal.
In exchange, Demon plays with Angel's goal. 
As Angel now wins, if Demon's goal is achieved, her winning region actually denotes the states where Demon wins, as well as vice versa.
Consequently, Angel and Demon swap goals and winning regions in the definitions of the dual game:
\[\winrA{\alpha^d}{X}{Y} = \winrD{\alpha}{Y}{X} \text{ and } \winrD{\alpha^d}{X}{Y} = \winrA{\alpha}{Y}{X}\]

The main idea to define the semantics of the repetition game is that allowing one more round of $\alpha$ should not change the winning region. Consequently, the winning region for Angel should be some kind of fixpoint of the form $X \cup \winrA{\alpha}{\winrA{\alpha^*}{X}{Y}}{\winrD{\alpha^*}{X}{Y}} = \winrA{\alpha^*}{X}{Y}$ as Angel either is already in her goal, or she can reach it after one more round of $\alpha$. The problem is that by the same idea, Demon's winning region should also be a fixpoint, so using a definition like this would result in complex nested fixpoints. Additionally, this fixpoint is not unique yet. To solve the first problem, a case distinction is made between competitive and cooperative behavior, resulting in two fixpoints. If they compete, the other's goal is the complement of the fixpoint because they have complementary goals, while in the cooperative case, the other's goal is the same fixpoint, as they share the same goal. Consequently, a candidate $Z$ for Angel's first fixpoint should fulfill $X\cup \winrA{\alpha}{Z}{Z^\C} = Z$ and for the second it should fulfill $(X\cap Y) \cup (\winrA{\alpha}{Z}{Z} \cap \winrD{\alpha}{Z}{Z}) = Z$. The intersection of their winning regions rules out that one player wins and fulfills both goals (otherwise they would not cooperate), but the other one loses, e.g., by failing a test. To make the fixpoints unique, either the greatest or the least fixpoint can be chosen. $\s$ as the greatest possible set fulfills the equation, but this does not contain much information, so the least fixpoint is used. It can be computed as the intersection of all pre-fixpoints (sets that fulfill the equation only with set inclusion). Taken together, the semantics for Angel is:
\[\winrA{\alpha^*}{X}{Y} = \begin{aligned}[t]
&\bigcap\{Z\subseteq \s \;|\; X \cup \winrA{\alpha}{Z}{Z^\C} \subseteq Z\}\\
&\cup \bigcap\{Z \subseteq \s \;|\; (X\cap Y) \cup (\winrA{\alpha}{Z}{Z} \cap \winrD{\alpha}{Z}{Z})\}
\end{aligned}
\]
The definition of Demon's winning region follows the same idea. The second fixpoint is identical because here, both of them win. If Demon competes with Angel, he can only win for sure if he always stays in his goal, no matter how many rounds Angel plays. Consequently, Demon needs to be in his goal now, and he should be in his winning region, even if one more round of $\alpha$ is allowed: $Z = Y \cap \winrD{\alpha}{Z^\C}{Z}$. Because the least fixpoint in this case is the empty set, which does not contain much information, the greatest fixpoint is used. It can be computed by union of all pre-fixpoints. Thus, the semantics for Demon are:
\[\winrD{\alpha}{X}{Y} = \begin{aligned}[t]
&\bigcup\{Z\subseteq \s \;|\; Z \subseteq Y \cap \winrD{\alpha}{Z^\C}{Z}\}\\ 
&\cup \bigcap\{Z \subseteq \s \;|\; (X\cap Y) \cup (\winrA{\alpha}{Z}{Z} \cap \winrD{\alpha}{Z}{Z})\}
\end{aligned}
\]

The full definition of the semantics can be found in the \appendixLocation.

\begin{remark}
    As in \dGL, goals can still be converted to tests at the end, i.e. there is an equivalent formulation for each modality where the goals are replaced by \emph{true} and the original goals appear as tests at the end of the program. As Angel's and Demon's winning regions are not complementary anymore, this conversion is much more subtle than in \dGL because Angel does not always win if Demon loses. Additionally, Demon's ability to win influences his choices and thus Angel's ability to win and vice versa. Therefore, tests for both goals must be incorporated, and Angel and Demon, resp., need to pass them in any order (either test Angel first, or test Demon first) to ensure that they do not only win because the other one loses. This also demonstrates that the opponent's goals matter to both players.
    For the full statement, reference \appendixLocationTwo.
\end{remark}

\section{\dGLsc Properties and Relation to Zero-sum \dGL}
\label{properties}
In this section, some important results for \dGLsc are proven. The first result proves monotonicity which ensures that the logic behaves in an intuitive way: A larger goal can be reached from a larger winning region.
The first lemma states that the logic is monotone if both goals are increased. In this case, both Angel's and Demon's winning region will increase.
The proof for this lemma is a straightforward structural induction over $\alpha$.

\begin{lemmaE}[Monotonicity][end]
\label{monotonicity}
    The logic \dGLsc is monotone, i.e.
    \[
    X\subseteq A, Y\subseteq B \text{ implies } \winrA{\alpha}{X}{Y} \subseteq \winrA{\alpha}{A}{B} \text{ and } \winrD{\alpha}{X}{Y} \subseteq \winrD{\alpha}{A}{B}
    \]
\end{lemmaE}
\begin{proofE}
    The proof is conducted by induction over the structure of the game $\alpha$.
    \begin{enumerate}
        \item For the atomic games (assignment, test and continuous game), $\winrA{\alpha}{X}{Y} \subseteq \winrA{\alpha}{A}{B}$ and $\winrD{\alpha}{X}{Y} \subseteq \winrD{\alpha}{A}{B}$ directly follows from replacing $X$ and $Y$ in the definition of the winning region by the bigger sets $A$ and $B$ respectively.
        \item Similarly, monotonicity can be proved for the repetition game.
        \item For the choice game and the dual game, monotonicity follows from applying the IH to all parts in their winning region's definition (as they solely consist of winning regions of games with lower structural complexity).
        \item $\winrA{\alpha;\beta}{X}{Y} = \winrA{\alpha}{\winrA{\beta}{X}{Y}}{\winrD{\beta}{X}{Y}}$ by definition. As $\winrA{\beta}{X}{Y} \subseteq \winrA{\beta}{A}{B}$ and $\winrD{\beta}{X}{Y} \subseteq \winrD{\beta}{A}{B}$ by IH, the IH can be applied, yielding \[\winrA{\alpha}{\winrA{\beta}{X}{Y}}{\winrD{\beta}{X}{Y}} \subseteq \winrA{\alpha}{\winrA{\beta}{A}{B}}{\winrD{\beta}{A}{B}} = \winrA{\alpha;\beta}{A}{B}\] $\winrD{\alpha;\beta}{X}{Y} \subseteq \winrD{\alpha;\beta}{A}{B}$ is proved similarly.
    \end{enumerate}
    \qed
\end{proofE}

The second lemma states that if Angel's and Demon's goals are disjoint, Angel's winning region expands by increasing her goal while Demon's goal may change arbitrarily. Intuitively, all parts in the semantics covering cooperation are empty since the goals are disjoint. Then, proving that the remaining sets grow by increasing the goal is fairly easy. After that, the parts where Angel and Demon cooperate, can be added. Thus, the winning region can only increase. The proof for Demon's statement is conducted similarly.

The following lemma is especially helpful if Angel and Demon have opposing goals, before and after the increase of one goal. In this case, enlarging one goal means that the other shrinks. Therefore, the first monotonicity lemma \ref{monotonicity} is not applicable but Lemma \ref{mon2} can be used.
\begin{lemmaE}[Disjoint Monotonicity][end]
    \label{mon2}
    If the goals of Angel and Demon are initially disjoint, then \dGLsc is monotone in each argument:
    \[
    \begin{aligned}
        &X\subseteq A, X\cap Y = \emptyset \text{ implies } \winrA{\alpha}{X}{Y} \subseteq \winrA{\alpha}{A}{B}
        \text{  and}\\
        &Y \subseteq B, X \cap Y = \emptyset \text{ implies } \winrD{\alpha}{X}{Y} \subseteq \winrD{\alpha}{A}{B}
    \end{aligned}
    \]
    Sets $A$ and $B$ may overlap. For each case the other goal can be chosen arbitrarily.
\end{lemmaE}
\begin{proofE}
    The proof is conducted by induction over the structure of the game $\alpha$ simultaneously for both claims.
    \begin{enumerate}
        \item For the assignment game and the test game, $\winrA{\alpha}{X}{Y} \subseteq \winrA{\alpha}{A}{B}$ and $\winrD{\alpha}{X}{Y} \subseteq \winrD{\alpha}{A}{B}$ directly follows from replacing $X$ and $Y$ resp. by the bigger sets $A$ and $B$ resp. in the winning region's definition.
        \item For the continuous game, this works as well to prove $\winrA{x'=f(x)\& Q}{X}{Y} \subseteq \winrA{x'=f(x)\& Q}{A}{B}$. 
        
        To prove $\winrD{x'=f(x)\& Q}{X}{Y} \subseteq \winrD{x'=f(x)\& Q}{A}{B}$, the assumption that $X\cap Y = \emptyset$ is also necessary. By definition,
        \[\winrD{x'=f(x)\& Q}{X}{Y} = \begin{aligned}[t]
            &\{\varphi(0) \in \s \;|\; \varphi(r) \in Y \text{ for all }r \text{ with } \varphi \models f(x) \land Q\}\\
            &\cup \{\varphi(0) \in \s \;|\; \varphi(r) \in (X\cap Y) \text{ for some }r \text{ with } \varphi \models f(x) \land Q\}
        \end{aligned}
        \]
        As $X\cap Y$ is empty, the union's right-hand side is empty. The remaining set can be increased by replacing $Y$ with the bigger set $B$, yielding $\{\varphi(0) \in \s \;|\; \varphi(r) \in B \text{ for all }r \text{ with } \varphi \models f(x) \land Q\}$. This set is then further increased by adding $\{\varphi(0) \in \s \;|\; \varphi(r) \in (A\cap B) \text{ for some }r \text{ with } \varphi \models f(x) \land Q\}$ to obtain the definition of $\winrD{x'=f(x)\& Q}{A}{B}$.
        \item $\winrA{\alpha\cup\beta}{X}{Y} \subseteq\winrA{\alpha\cup\beta}{A}{B}$ can directly be proven by applying the IH to all parts of the winning region's definition.
        
        By definition,
        \[\winrD{\alpha\cup\beta}{X}{Y} = \begin{aligned}[t]
            &(\winrD{\alpha}{X}{Y} \cap \winrD\beta{X}{Y})\\
            &\cup (\winrD{\alpha}{X}{Y}\cap\winrA{\alpha}{X}{Y})\\
            &\cup (\winrD{\beta}{X}{Y} \cap \winrA{\beta}{X}{Y})\\
        \end{aligned}\]
        As $X\cap Y$ is empty, Angel and Demon cannot win at the same time. Therefore, all parts except $\winrD{\alpha}{X}{Y} \cap \winrD\beta{X}{Y}$ are empty. By IH, this remaining intersection is a subset of $\winrD{\alpha}{A}{B} \cap \winrD\beta{A}{B}$. Finally, $\winrD{\alpha\cup\beta}{A}{B}$ is obtained by adding $\winrD{\alpha}{A}{B}\cap\winrA{\alpha}{A}{B}$ and $\winrD{\beta}{A}{B} \cap \winrA{\beta}{A}{B}$.
        \item By definition, $\winrA{\alpha;\beta}{X}{Y} = \winrA{\alpha}{\winrA{\beta}{X}{Y}}{\winrD{\beta}{X}{Y}}$. As $X\cap Y$ is empty, $\winrA{\beta}{X}{Y}\cap \winrD{\beta}{X}{Y}$ must also be empty because Angel and Demon cannot win at the same time. Additionally, $\winrA{\beta}{X}{Y} \subseteq \winrA{\beta}{A}{B}$ by IH. Consequently, the IH can also be applied to $\winrA{\alpha}{\winrA{\beta}{X}{Y}}{\winrD{\beta}{X}{Y}}$, yielding the bigger set $\winrA{\alpha}{\winrA{\beta}{A}{B}}{\winrD{\beta}{A}{B}} = \winrA{\alpha;\beta}{X}{Y}$. The case 
        for Demon is proved in the same way.
        \item For the dual game, both claims directly follow from applying the IH to the winning region's definition.
        \item By definition, \[\winrA{\alpha^*}{X}{Y} = \begin{aligned}
            &\bigcap\{Z\subseteq \s \;|\; X \cup \winrA{\alpha}{Z}{Z^\C} \subseteq Z\}\\
            &\cup \bigcap\{Z\subseteq \s \;|\; (X\cap Y) \cup (\winrA{\alpha}{Z}{Z}\cap \winrD{\alpha}{Z}{Z}) \subseteq Z\}
        \end{aligned}\]
        As $X\cap Y$ is empty, the right-hand side of the union is empty. The remaining set can be increased by replacing $X$ with the bigger set $A$, yielding $\bigcap\{Z\subseteq \s \;|\; A \cup \winrA{\alpha}{Z}{Z^\C} \subseteq Z\}$. This set is then further increased by adding $\bigcap\{Z\subseteq \s \;|\; (A\cap B) \cup (\winrA{\alpha}{Z}{Z}\cap \winrD{\alpha}{Z}{Z}) \subseteq Z\}$ to obtain the definition of $\winrA{\alpha^*}{A}{B}$. Similarly, $\winrD{\alpha^*}{X}{Y} \subseteq \winrD{\alpha^*}{A}{B}$ can be proved.
    \end{enumerate}
    \qed
\end{proofE}

Another important property is that the logic behaves as intended. But how can this be checked? If the players have complementary goals, they have to compete and will behave adversarially. The same setting can also be expressed in \dGL (which is assumed to be defined suitably). Consequently, both formulas should also have the same truth values. This holds indeed, as Lemma \ref{CollapseDGL} shows:

\begin{lemmaE}[Complementary \dGL Equivalence][end]
    \label{CollapseDGL}
If Angel and Demon have complementary goals, the logic \dGLsc reduces to \dGL, i.e. it holds that
\[
\winrA{\alpha}{X}{X^\C} = \winrA{\alpha}{X} \quad\text{and}\quad
\winrD{\alpha}{X^\C}{X} = \winrD{\alpha}{X}
\]
\end{lemmaE}
\begin{proofE}
    The proof is conducted by induction over the structure of the game $\alpha$.
    \begin{enumerate}
        \item For the assignment game and test game for Angel and Demon, and the continuous game for Angel, the winning region's definition in \dGLsc corresponds to the definition in \dGL.
        \item By definition, \[\winrD{x'=f(x)\&Q}{X,X^\C} = \begin{aligned}[t]
            &\begin{aligned}[t]
                \{\varphi(0)\in\s\;|\; \varphi(r) \in X^\C \text{ for all } r\hspace{3cm}&\\ \text{ with } \varphi \models x'=f(x) \land Q\}&
            \end{aligned}\\
            &\cup\begin{aligned}[t]
                \{\varphi(0) \in \s \;|\; \varphi(r) \in (X\cap X^\C) \text{ for some } r \hspace{1.3cm}&\\
                \text{ with } \varphi \models x'=f(x) \land Q\}&
            \end{aligned} 
        \end{aligned}\]
        As $X\cap X^\C$ is empty, the right-hand side of the union is empty. The remaining part corresponds to $\winrD{x'=f(x)\&Q}{X^\C}$.
        \item For the choice game, $\winrA{\alpha\cup\beta}{X}{X^\C} = \winrA{\alpha\cup\beta}{X}$ can directly be proved by applying the IH to all parts of the winning region's definition.
        
        By definition, 
        \[\winrD{\alpha\cup\beta}{X}{X^\C} = \begin{aligned}[t]
            &(\winrD{\alpha}{X}{X^\C} \cap \winrD{\beta}{X}{X^\C})\\
            &\cup (\winrD{\alpha}{X}{X^\C} \cap \winrA{\alpha}{X}{X^\C})\\
            &\cap (\winrD{\beta}{X}{X^\C} \cap \winrA{\beta}{X}{X^C})
        \end{aligned}\]
        As $X\cap X^\C$ is empty, all parts except $\winrD{\alpha}{X}{X^\C} \cap \winrD{\beta}{X}{X^\C}$ are empty because Angel and Demon cannot win simultaneously. By applying the IH to the remaining parts, the definition of $\winrD{\alpha\cup \beta}{X^\C}$ is obtained.
        \item $\winrA{\alpha;\beta}{X}{X^\C} = \winrA{\alpha}{\winrA{\beta}{X}{X^\C}}{\winrD{\beta}{X}{X^\C}}$ by definition. Using the IH on $\winrA{\beta}{X}{X^\C}$ and $\winrD{\beta}{X}{X^\C}$ yields $\winrA{\alpha}{\winrA{\beta}{X}}{\winrD{\beta}{X^\C}}$. As determinacy holds for \dGL, $\winrD{\beta}{X^\C} = \winrA{\beta}{X}^\C$. Now, the IH can be applied a second time, yielding $\winrA{\alpha}{\winrA{\beta}{X}} = \winrA{\alpha;\beta}{X}$. Similarly, $\winrD{\alpha;\beta}{X}{X^\C} = \winrD{\alpha;\beta}{X^\C}$ can be proved.
        \item By definition, $\winrA{\alpha^d}{X}{X^\C} = \winrD{\alpha}{X^\C}{X}$. By IH, this is equal to $\winrD{\alpha}{X}$. Using \dGL's determinacy, this can be transformed to $(\winrA{\alpha}{X^\C})^\C = \winrA{\alpha^d}{X}$. Demon's case is proved similarly.
        \item By definition,
        \[\winrA{\alpha^*}{X}{X^\C} = \begin{aligned}[t]
            &\bigcap\{Z\subseteq \s \;|\; X \cup \winrA{\alpha}{Z}{Z^\C} \subseteq Z\}\\
            &\cup \bigcap\{Z\subseteq \s \;|\; (X\cap X^\C) \cup (\winrA{\alpha}{Z}{Z} \cap \winrD{\alpha}{Z}{Z}) \subseteq Z\}
        \end{aligned}\]
        As $X\cap X^\C$ is empty, the right-hand side of the union is empty. Applying the IH to the remaining part yields $\bigcap\{Z\subseteq \s \;|\; X \cup \winrA{\alpha}{Z} \subseteq Z\} = \winrA{\alpha^*}{X}$. Similarly, $\winrD{\alpha^*}{X}{X^\C} = \winrD{\alpha^*}{X^\C}$ is proved.
    \end{enumerate}
    \qed
\end{proofE}

In the proof it is shown via structural induction that all parts in the semantics that differ between \dGLsc and \dGL cancel out, if the goals oppose.
\begin{remark}
    If there were a super-logic that contained both \dGLsc and \dGL, the previous lemma could be rendered as the syntactic equivalences
    \[\angel{\alpha}{P}{\lnot P} \leftrightarrow \angel{\alpha}{P} \quad\text{and}\quad \demon{\alpha}{\lnot P}{P} \leftrightarrow \demon{\alpha}{P}\]
\end{remark}

\begin{notation}
    In the following, $\angel{\alpha}{P}{\lnot P}$ will be abbreviated with $\angel{\alpha}{P}$ and $\demon{\alpha}{\lnot Q}{Q}$ with $\demon{\alpha}{Q}$, as well as $\winrA{\alpha}{X}{X^\C}$ with $\winrA{\alpha}{X}$ and $\winrD{\alpha}{Y^\C}{Y}$ with $\winrD{\alpha}{Y}$. These two formulas express the same and have the same winning regions in \dGLsc and \dGL respectively(Lem.~\ref{CollapseDGL}). The abbreviations allow for staying in the logic \dGLsc without having to write $P, Q, X, Y$ twice.
\end{notation}

Consequently, determinacy holds even though it only does for complementary goals, which will be useful for the proof calculus later on.
\begin{corollaryE}[Complementary Determinacy][end]
\label{determinancy}
    Determinacy holds if Angel and Demon have complementary goals, i.e. the following equivalence is valid:
    \[
    \lnot \angel{\alpha}{P}{\lnot P} \leftrightarrow \demon{\alpha}{P}{\lnot P}
    \]
\end{corollaryE}
\begin{proofE}
    To prove this corollary, it is shown that the winning regions of both sides of the equivalence are equal. By definition, $\winr{\lnot \angel{\alpha}{P}{\lnot P}} = \winrA{\alpha}{\winr{P}}{\winr{P}^\C}^\C$. Using L. \ref{CollapseDGL}, this can be transformed to $\winrA{\alpha}{\winr{P}}^\C$ which is equal to $\winrD{\alpha}{\winr{P}^\C}$ by \dGL's determinacy. With another use of L.\ref{CollapseDGL}, this is transformed back to \dGLsc, yielding $\winrD{\alpha}{\winr{P}}{\winr{P}^\C} = \winr{\demon{\alpha}{P}{\lnot P}}$.
    \qed
\end{proofE}

To show that \dGLsc's definition is reasonable for the general case, first observe the following: If Angel and Demon compete in a later game, they will not cooperate in an earlier game because they will not reach both goals at the same time anyway. Similarly, if Angel and Demon cooperate in a later game, they can already do so for earlier games because they can win together. Inductively, this means that Angel and Demon can already decide at the start whether to cooperate or to compete! As competing players try to achieve mutually exclusive goals, each of them might as well play against an adversarial player because the other one might hinder them at achieving their goal in order to reach their own. Two cooperating players can also be modeled as an adversarial game: The coalition is regarded as one player and tries to reach both goals, playing against an adversarial player with no control. Since one of the two cases always applies, every semi-competitive game can be split into two adversarial games. Using this alternative description, the definition of \dGLsc can be checked against \dGL again. The equivalence holds indeed, as Lemma \ref{rewriting} shows. The $^{-d}$ indicates the removal of all dual operators to transfer all controls to the coalition player. 
Although the conversion is possible, there are \emph{two} winning regions in \dGL instead of one in \dGLsc which doubles the work for proving formulas. On top, the formulas are linked by disjunction which makes it impossible to get rid of one part if each formula only holds for certain cases. Additionally, \dGLsc offers a much more intuitive way of expressing games where both players have an individual goal.

\begin{lemmaE}[General \dGL Equivalence][end]
\label{rewriting}
    Winning regions in \dGLsc correspond to those in \dGL as follows:
    \[\winrA{\alpha}{X}{Y} = \winrA{\alpha^{-d}}{X\cap Y} \cup \winrA{\alpha}{X} \quad\text{and}\quad
    \winrD{\alpha}{X}{Y} = \winrA{\alpha^{-d}}{X \cap Y} \cup \winrD{\alpha}{Y}\]
    where $\alpha$ is a hybrid game, $X,Y$ are sets of states and $\alpha^{-d}$ is defined in Table~\ref{tab:d_def}.

    \begin{table}[b]
        \caption{Hybrid system $\alpha^{-d}$ is the \emph{systematization} of hybrid game $\alpha$}
        \label{tab:d_def}
    \begin{tabularx}{\textwidth}{X X}
            $(x:=e)^{-d} \equiv x:=e$ & $(\alpha;\beta)^{-d} \equiv \alpha^{-d};\beta^{-d}$ \\
            $(x'=f(x)\& Q)^{-d} \equiv x'=f(x)\& Q$ & $(\alpha^d)^{-d} \equiv \alpha^{-d}$ \\
            $(?Q)^{-d} \equiv ?Q$ & $(\alpha^*)^{-d} \equiv (\alpha^{-d})^*$ \\
            $(\alpha \cup \beta)^{-d} \equiv \alpha^{-d} \cup \beta^{-d}$ & \\
    \end{tabularx} 
\end{table}
\end{lemmaE}
\begin{proofE}
    First, three auxiliary lemmas are proven to facilitate the proof of L. \ref{rewriting}. The first lemma (L.\ref{dgl cut rewrite}) shows that if both Angel and Demon can win against an adversarial player, they can win by forming a coalition. The second and third (L. \ref{incl_Angel} and L.\ref{incl_Demon}) consider the situation where Angel and Demon resp. play a sequential game. If they can win by playing the first game on their own and choose between cooperating or not for the second game, but cannot win both games on their own, they can also win by cooperating for both games.

    \begin{lemma}
    \label{dgl cut rewrite}
        It holds that:
        \[ \winrA{\alpha}{X} \cap \winrD{\alpha}{Y}\subseteq \winrA{\alpha^{-d}}{X\cap Y} \]
    \end{lemma}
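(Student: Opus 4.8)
The plan is to prove $\winrA{\alpha}{X} \cap \winrD{\alpha}{Y} \subseteq \winrA{\alpha^{-d}}{X \cap Y}$ by structural induction on the hybrid game $\alpha$, using the abbreviation convention so that $\winrA{\alpha}{X} = \winrA{\alpha}{X}{X^\C}$ and $\winrD{\alpha}{Y} = \winrD{\alpha}{Y^\C}{Y}$ are the ordinary \dGL winning regions (by Lemma~\ref{CollapseDGL}). Note $\alpha^{-d}$ is dual-free, so $\winrA{\alpha^{-d}}{\cdot}$ is a genuine \dGL angelic region with no control given to Demon at all. Intuitively, a state from which Angel can force $X$ \emph{and} from which Demon can force $Y$ is one where, whichever way the single coalition controller resolves every choice, the outcome lands in $X$ (because it is a legal Angel play) and in $Y$ (because it is a legal Demon play), hence in $X \cap Y$.

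First I would dispatch the atomic games. For $x:=e$, $?Q$ all three regions coincide with the \dGL definition and the inclusion $\{ \omega : \omega \in X\} \cap \{\omega : \omega \in Y\} \subseteq \{\omega : \omega \in X\cap Y\}$ (modulo the assignment/test prefix) is immediate; the continuous case $x'=f(x)\&Q$ needs a little care because Angel's region asks for \emph{some} $r$ with $\varphi(r)\in X$ while Demon's asks for \emph{all} $r$ to stay in $Y$ (first disjunct of his region) --- but a state in Demon's \dGL region for $Y$ keeps every $Q$-respecting solution inside $Y$ for all $r$, so the particular $r$ witnessing $X$ also witnesses $Y$, giving $\varphi(r)\in X\cap Y$ as needed for $\winrA{(x'=f(x)\&Q)^{-d}}{X\cap Y} = \winrA{x'=f(x)\&Q}{X\cap Y}$. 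For the inductive cases, $\cup$ uses $\winrA{\alpha\cup\beta}{X} = \winrA\alpha X \cup \winrA\beta X$ and distributes the intersection with $\winrD{\alpha\cup\beta}{Y}$ over the union; here one exploits that $\winrD{\alpha\cup\beta}{Y}$ is \emph{contained} in $\winrD\alpha Y \cap \winrD\beta Y$ after expanding the abbreviation (the extra cooperation disjuncts vanish because $Y$ and $Y^\C$ are disjoint, exactly as in the proof of Lemma~\ref{mon2}), so each disjunct $\winrA\alpha X$ meets $\winrD\alpha Y$ and the IH applies, and symmetrically for $\beta$. The $^d$ case swaps the two hypotheses: $\winrA{\alpha^d}{X} = \winrD{\alpha}{X}$ and $\winrD{\alpha^d}{Y} = \winrA{\alpha}{Y}$, so the premise becomes $\winrD\alpha X \cap \winrA\alpha Y \subseteq \winrA{\alpha^{-d}}{X\cap Y} = \winrA{(\alpha^d)^{-d}}{X\cap Y}$, which is the IH with the roles of $X$ and $Y$ interchanged.

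The two genuinely delicate cases are sequential composition and repetition. For $\alpha;\beta$ the coalition semantics gives $\winrA{\alpha;\beta}{X}$ and $\winrD{\alpha;\beta}{Y}$ as \dGL-nested regions; I would intersect, then observe that a state in both lies in $\winrA{\alpha}{\winrA\beta X} \cap \winrD{\alpha}{\winrD\beta Y}$, and crucially that $\winrA\beta X \cap \winrD\beta Y \subseteq \winrA{\beta^{-d}}{X\cap Y}$ by the IH applied to $\beta$ --- but to push this through $\alpha$ I need the intersection \emph{already at the $\alpha$-level}, i.e. I need $\winrA{\alpha}{\winrA\beta X} \cap \winrD{\alpha}{\winrD\beta Y} \subseteq \winrA{\alpha}{\winrA\beta X \cap \winrD\beta Y}$, and this is \emph{not} the statement of the IH for $\alpha$ but requires the IH for $\alpha$ applied with the two different sets $\winrA\beta X$ and $\winrD\beta Y$ --- so I should prove the sharper form $\winrA{\alpha}{U} \cap \winrD{\alpha}{V} \subseteq \winrA{\alpha^{-d}}{U\cap V}$ for arbitrary $U,V$ from the start (which is in fact what is written), and then chain: $\winrA{\alpha}{\winrA\beta X} \cap \winrD{\alpha}{\winrD\beta Y} \subseteq \winrA{\alpha^{-d}}{\winrA\beta X \cap \winrD\beta Y} \subseteq \winrA{\alpha^{-d}}{\winrA{\beta^{-d}}{X\cap Y}} = \winrA{(\alpha;\beta)^{-d}}{X\cap Y}$, using \dGL-monotonicity (rule $\mathrm{M}$) for the second step. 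For $\alpha^*$, using the abbreviation the angelic region is the least pre-fixpoint $\bigcap\{Z : X \cup \winrA{\alpha}{Z}{Z^\C} \subseteq Z\}$ and the demonic region is the greatest post-fixpoint $\bigcup\{Z : Z \subseteq Y \cap \winrD{\alpha}{Z^\C}{Z}\}$ --- i.e. exactly the \dGL $\rA$/$\rD$ fixpoints. I expect the repetition case to be the main obstacle: the clean way is to invoke the \dGL fixpoint reasoning, showing that $\winrA{(\alpha^{-d})^*}{X\cap Y}$ is itself a pre-fixpoint "sandwiched" correctly, or better, to run a simultaneous induction on the unfolding --- concretely, prove by (co)induction that $\winrA{\alpha^*}{X} \cap \winrD{\alpha^*}{Y} \subseteq \winrA{(\alpha^{-d})^*}{X\cap Y}$ by verifying that the right-hand side $W := \winrA{(\alpha^{-d})^*}{X\cap Y}$ satisfies $X\cap Y \subseteq W$ and $\winrA{\alpha^{-d}}{W} \subseteq W$, then bounding the left-hand intersection above by $W$ using the inner IH for $\alpha$ to collapse one round ($\winrA{\alpha}{\winrA{\alpha^*}{X}} \cap \winrD{\alpha}{\winrD{\alpha^*}{Y}} \subseteq \winrA{\alpha^{-d}}{\winrA{\alpha^*}{X}\cap\winrD{\alpha^*}{Y}}$) and then \dGL-monotonicity plus the fixpoint property of $W$. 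The one subtlety to watch is that Angel's $^*$-region and Demon's $^*$-region use \emph{least} vs.\ \emph{greatest} fixpoints, so the induction principle available is different on the two factors; I would reconcile this by fixing the angelic side via its FP rule (treating $W$ as the target $Q$ in rule $\mathrm{FP}$, once the intersection has been shown to be a pre-fixpoint of $X \lor \winrA{\alpha^{-d}}{\cdot}$).
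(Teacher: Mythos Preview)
Your induction for the atomic games, for $\cup$, for $;$, and for $^d$ is correct and coincides with the paper's argument; in particular the two-stage use of the inductive hypothesis for $\alpha;\beta$ (first the IH on $\alpha$ with goals $\winrA{\beta}{X}$ and $\winrD{\beta}{Y}$, then \dGL monotonicity together with the IH on $\beta$) is exactly what the paper does.

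For $\alpha^*$ there is a real gap in your plan. Writing $A=\winrA{\alpha^*}{X}$, $D=\winrD{\alpha^*}{Y}$ and $W=\winrA{(\alpha^{-d})^*}{X\cap Y}$, unfolding one round with the inner IH as you describe yields
\[
A\cap D \;\subseteq\; (X\cap Y)\,\cup\,\winrA{\alpha^{-d}}{A\cap D},
\]
i.e.\ $A\cap D$ is a \emph{post}-fixpoint of $k(Z)=(X\cap Y)\cup\winrA{\alpha^{-d}}{Z}$. That only bounds $A\cap D$ by the \emph{greatest} fixpoint of $k$, not by $W=\mathrm{lfp}(k)$. Conversely, ``showing the intersection is a pre-fixpoint of $X\lor\winrA{\alpha^{-d}}{\cdot}$'' (your last sentence) would give $W\subseteq A\cap D$, the wrong inclusion; and taking $W$ itself as the target of the angelic FP rule would require $A\subseteq W$, which is false in general. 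Your instinct to invoke the angelic FP rule is right, but the target must absorb $D$: set $Z:=W\cup D^\C$ and check $X\cup\winrA{\alpha}{Z}\subseteq Z$. The $X$-part is immediate since $X\cap D\subseteq X\cap Y\subseteq W$; for $\omega\in\winrA{\alpha}{Z}\cap D$ use $D\subseteq\winrD{\alpha}{D}$ and the inner IH on $\alpha$ to get $\omega\in\winrA{\alpha^{-d}}{Z\cap D}\subseteq\winrA{\alpha^{-d}}{W}\subseteq W$. Hence $A\subseteq W\cup D^\C$, and intersecting with $D$ gives $A\cap D\subseteq W$.

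For comparison, the paper also argues by structural induction with the same treatment of the non-repetition cases, but for $\alpha^*$ it proceeds differently: it first replaces Demon's greatest-fixpoint region $D$ by the least pre-fixpoint $\bigcap\{Z_2: Y\cap\winrD{\alpha}{Z_2}\subseteq Z_2\}$ and then merges the two intersected least fixpoints into a single one indexed by $Z_1\cap Z_2$ before applying the IH. Your FP-based route with target $W\cup D^\C$ is more direct and sidesteps that bounding step, which is the delicate point of the paper's argument.
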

    \begin{proof}
        The proof is conducted by induction over the structural complexity of $\alpha$.
        \begin{enumerate}
            \item $\winrA{x:=e}{X} \cap \winrD{x:=e}{Y} = \winrA{(x:=e)^{-d}}{X\cap Y}$ and $\winrA{?Q}{X} \cap \winrD{?Q}{Y} = \winrA{(?Q)^{-d}}{X\cap Y}$ can easily be seen by intersecting the winning regions' definitions.
            \item By definition \[\winrA{x'=f(x)\&Q}(X) \cap \winrD{x'=f(x) \& Q}{Y} = \begin{aligned}[t]
                &\begin{aligned}[t]
                    \{\varphi(0)\in\s\;|\; \varphi(r) \in X \text{ for some } r\hspace{1.4cm}&\\
                    \text{ with } \varphi \models x'=f(x)\land Q\}&
                \end{aligned}
                \\
                &\begin{aligned}[t]
                    \cup \{\varphi(0)\in\s\;|\; \varphi(r) \in Y \text{ for some } r \hspace{1.2cm}&\\
                    \text{ with } \varphi \models x'=f(x) \land Q \}&
                \end{aligned}
            \end{aligned}\]
            i.e. $Y$ is true in all states and there exists a state where $X$ is true. Consequently, there exists a state where both $X$ and $Y$ are true. Keeping only this requirement and dropping the requirement that $Y$ has to be true in all states, increases the set to $\{\varphi(0)\in\s\;|\; \varphi(r)\in X\cap Y \text{ for some } r \text{ with } \varphi \models x'=f(x) \land Q\} = \winrA{(x'=f(x) \&Q)^{-d}}{X\cap Y}$.
            \item By definition, 
            \[\begin{aligned}[t]\winrA{\alpha\cup\beta}{X}\cap \winrD{\alpha\cup\beta}{Y} &= 
                (\winrA{\alpha}{X} \cup \winrA{beta}{X}) \cap (\winrD{\alpha}{Y}\cap \winrD{\beta}{Y})\\
                &= (\winrA{\alpha}{X} \cap \winrD{\alpha}{Y}\cap \winrD{\beta}{Y}) \cup (\winrA{\beta}{X} \cap \winrD{\alpha}{Y}\cap \winrD{\beta}{Y})
            \end{aligned}\]
            By dropping $\winrD{\beta}{Y}$ and $\winrD{\alpha}{Y}$ resp. from the left- and right-hand side of the union resp., the set is increased to $(\winrA{\alpha}{X} \cap \winrD{\alpha}{Y}) \cup (\winrA{\beta}{X} \cap \winrD{\beta}{Y})$. Then, the IH is used on both sides of the union, yielding $\winrA{\alpha^{-d}}{X\cap Y} \cup \winrA{\beta^{-d}}{X\cap Y} = \winrA{(\alpha\cup\beta)^{-d}}{X\cap Y}$.
            \item $\winrA{\alpha;\beta}{X} \cap \winrD{\alpha;\beta}{Y} = \winrA{\alpha}{\winrA{\beta}{X}} \cap \winrD{\alpha}{\winrD{\beta}{Y}}$ by definition. First, the IH is used on the outer winning regions for $\alpha$, increasing the set to $\winrA{\alpha^{-d}}{\winrA{\beta}{X} \cap \winrD{\beta}{Y}}$. Now the IH can be used a second time on the inner winning regions for $\beta$, yielding $\winrA{\alpha^{-d}}{\winrA{\beta^{-d}{X\cap Y}}} = \winrA{(\alpha;\beta)^{-d}}{X\cap Y}$.
            \item By definition $\winrA{\alpha^d}{X} \cap \winrD{\alpha^d}{Y} = (\winrA{\alpha}{X^\C})^\C \cap (\winr{\alpha}{Y^\C})^\C$. By \dGL's determinacy, this can be transformed to $\winrD{\alpha}{X} \cap \winrA{\alpha}{Y}$. Applying the IH yields $\winrA{\alpha^{-d}}{X\cap Y} = \winrA{(\alpha^{d})^{-d}}{X\cap Y}$.
            \item By definition,
            \[
                \winrA{\alpha^*}{X} \cap \winrD{\alpha^*}{Y} = \begin{aligned}[t]
                    &\bigcap \{Z_1\ \subseteq \s \;|\; X \cup \winrA{\alpha}{Z_1} \subseteq Z_1\}\\
                    & \cap \bigcup\{Z_2 \in \s \;|\; Z_2 \subseteq Y \cap \winrD{\alpha}{Z_2}\}
            \end{aligned} \]
            The least fixpoint in Demon's winning region can be upper bounded by a greatest fixpoint formed by all pre-fixpoints fulfilling $Y\cap \winrD{\alpha}{Z_2} \subseteq Z_2$. Intuitively, intersecting all sets that are supersets of $Y\cap \winrD{\alpha}{Z_2}$ yields a bigger set than unifying all sets that are subsets of $Y\cap \winrD{\alpha}{Z_2}$. As a result, $\bigcap \{Z_1\ \subseteq \s \;|\; X \cup \winrA{\alpha}{Z_1} \subseteq Z_1\}\cup \bigcap\{Z_2 \in \s \;|\; Y \cap \winrD{\alpha}{Z_2}\subseteq Z_2\}$ is obtained. $X\cup \winrA{\alpha}{Z_1}$ being a subset of $Z_1$ and $Y\cap \winrD{\alpha}{Y}$ being a subset of $Z_2$, implies that their intersection lies in $Z_1 \cap Z_2$, i.e. the set can be increased to $\bigcap\{Z_1, Z_2 \in \s \;|\; (X\cup \winrA{\alpha}{Z_1}) \cap Y \cap \winrD{\alpha}{Z_2} \subseteq Z_1 \cap Z_2\}$. Expanding the inner term yields $(X\cap Y \cap \winrD{\alpha}{Z_2}) \cup (\winrA{\alpha}{Z_1} \cap Y \cap \winrD{\alpha}{Z_2})$. Increasing this set by dropping $\winrD{\alpha}{Z_2}$ and $Y$ resp. from the left- and right-hand side of the union resp. (thus increasing the whole fixpoint) makes the IH applicable which yields $\{Z_1, Z_2 \in \s \;|\; (X\cap Y) \cup \winrA{\alpha^{-d}}{Z_1 \cap Z_2} \subseteq Z_1 \cap Z_2\}$. Renaming $Z_1 \cap Z_2$ to $Z$ returns $\bigcap\{Z\in \s \;|\; (X\cap Y) \cup \winrA{\alpha^{-d}}{Z} \subseteq Z\} = \winrA{(\alpha^*)^{-d}}{X\cap Y}$.
        \end{enumerate}
        \qed
    \end{proof}
    \begin{lemma}
    \label{incl_Angel}
        It holds that
        \[\winrA{\alpha}{\winrA{\beta^{-d}}{X \cap Y}} \cup \winrA{\beta}{X} \cap (\winrA{\alpha}{\winrA{\beta}{X}})^\C \subseteq \winrA{\alpha^{-d}}{\winrA{\beta^{-d}}{X \cap Y}}\]
    \end{lemma}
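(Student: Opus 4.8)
The plan is to reduce to a statement that makes no reference to the particular shape of the two sets involved. Concretely, I would first prove that for arbitrary sets $S,T$ of states
\[\winrA{\alpha}{S\cup T}\cap\winrA{\alpha}{T}^\C\subseteq\winrA{\alpha^{-d}}{S},\]
and then instantiate $S:=\winrA{\beta^{-d}}{X\cap Y}$ and $T:=\winrA{\beta}{X}$ to obtain Lemma~\ref{incl_Angel}. This matches the informal reading: a state from which Angel can force the game $\alpha$ into $S\cup T$ but from which she cannot force it into $T$ is a state at which she is effectively pushed into the ``cooperation'' region $S$; combining Angel's strategy witnessing $S\cup T$ with Demon's strategy witnessing the failure of $T$ produces a play that ends in $S$, and since the coalition player of the systematization $\alpha^{-d}$ owns all controls, it can run both strategies simultaneously.

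The argument itself is a short chain of three moves. First, by determinacy for complementary goals (Corollary~\ref{determinancy}; equivalently dGL determinacy together with Lemma~\ref{CollapseDGL}) the complemented term is itself one of Demon's winning regions, $\winrA{\alpha}{T}^\C=\winrD{\alpha}{T^\C}$, so the left-hand side equals $\winrA{\alpha}{S\cup T}\cap\winrD{\alpha}{T^\C}$. Second, Lemma~\ref{dgl cut rewrite} applied with Angel's goal $S\cup T$ and Demon's goal $T^\C$ gives $\winrA{\alpha}{S\cup T}\cap\winrD{\alpha}{T^\C}\subseteq\winrA{\alpha^{-d}}{(S\cup T)\cap T^\C}$. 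Third, $(S\cup T)\cap T^\C=(S\cap T^\C)\cup(T\cap T^\C)=S\cap T^\C\subseteq S$, so monotonicity of the dGL winning region in its goal yields $\winrA{\alpha^{-d}}{(S\cup T)\cap T^\C}\subseteq\winrA{\alpha^{-d}}{S}$, which closes the inclusion.

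The one genuine obstacle is recognizing the first step: Lemma~\ref{dgl cut rewrite} is stated as an Angel region intersected with a Demon region, so the complemented factor $\winrA{\alpha}{\winrA{\beta}{X}}^\C$ must be turned into $\winrD{\alpha}{\cdot}$ by determinacy before the cut-style lemma can even be invoked. A secondary subtlety is that monotonicity has to be used in the dGL sense on $\alpha^{-d}$: in the abbreviated dGLsc notation $\winrA{\alpha^{-d}}{\cdot}$ carries complementary goals, so enlarging the goal from $S\cap T^\C$ to $S$ simultaneously shrinks the companion goal and Lemma~\ref{monotonicity} does not directly apply; Lemma~\ref{CollapseDGL} collapses $\winrA{\alpha^{-d}}{\cdot}$ to the plain dGL winning region, which is monotone by the M rule. (A structural induction on $\alpha$ is also possible --- atomic games by unfolding the definitions, $\cup$ by distributing the intersection and applying the induction hypothesis to each branch, $^d$ again via determinacy plus Lemma~\ref{dgl cut rewrite}, and $^*$ by comparing the relevant fixpoints --- but it re-derives exactly the reasoning Lemma~\ref{dgl cut rewrite} already packages, so the non-inductive route above is preferable.)
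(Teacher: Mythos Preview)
Your proof is correct and follows essentially the same route as the paper: rewrite the complemented factor via determinacy, apply Lemma~\ref{dgl cut rewrite} to the outer game $\alpha$, simplify the resulting goal, and finish by monotonicity. Your abstraction to arbitrary $S,T$ is in fact slightly cleaner than the paper's version, which applies determinacy a second time on $\beta$ and then needs a second invocation of Lemma~\ref{dgl cut rewrite} (plus the observation $\winrA{\beta^{-d}}{\emptyset}=\emptyset$) to dispose of the term $\winrA{\beta}{X}\cap\winrD{\beta}{X^\C}$ that you eliminate directly as $T\cap T^\C=\emptyset$.
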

    \begin{proof}
        $\winrA{\alpha}{\winrA{\beta^{-d}}{X\cap Y} \cup \winrA{\beta}{X}} \cap (\winrA{\alpha}{\winrA{\beta}{X}})^\C = \winrA{\alpha}{\winrA{\beta^{-d}}{X\cap Y}\cup \winrA{\beta}{X}} \cap \winrD{\alpha}{\winrD{\beta}{X^\C}}$ by applying determinacy twice. Then, L. \ref{dgl cut rewrite} is used to merge the winning regions and increase them to $\winrA{\alpha^-d}{(\winrA{\beta^{-d}}{(X\cap Y)} \cup \winrA{\beta}{X}) \cap \winrD{\beta}{X^\C}}$. Expanding the term yields $\winrA{\alpha^{-d}}{(\winrA{\beta^{-d}}{X\cap Y} \cap \winrD{\beta}{X^\C}) \cup (\winrA{\beta} \cap \winrD{\beta}{X^\C})} \subseteq \winrA{\alpha^{-d}}{(\winrA{\beta^{-d}}{X\cap Y} \cap \winrD{\beta}{X^\C}) \cup \winrA{\beta^{-d}}{X \cap X^\C}}$ by L. \ref{dgl cut rewrite} and monotonicity. As $X \cap X^\C$ is empty, $\winrA{\beta^{-d}}{X\cap X^\C}$ must also be empty because the goal is unreachable and all games are under Angel's control, so she cannot win by Demon losing a test or being unable to evolve an ODE. Finally, $\winrA{\alpha^{-d}}{\winrA{\beta^{-d}}{X\cap Y} \cap \winrD{\beta}{X^\C}} \subseteq \winrA{\alpha^{-d}}{\winrA{\beta^{-d}}{X\cap Y}}$ by monotonicity.
        \qed
    \end{proof}
    \begin{lemma}
    \label{incl_Demon}
        It holds that:
        \[
        \winrD{\alpha}{\winrA{\beta^{-d}}{X \cap Y} \cup \winrD{\beta}{Y}} \cap (\winrD{\alpha}{\winrD{\beta}{Y}})^\C \subseteq \winrA{\alpha^{-d}}{\winrA{\beta^{-d}}{X \cap Y}}
        \]
    \end{lemma}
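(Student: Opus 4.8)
The plan is to mirror the proof of Lemma~\ref{incl_Angel} on Demon's side: first use determinacy to turn the complemented \dGL winning region into an ordinary Angel winning region, then fold the resulting Angel/Demon pair for $\alpha$ into a single coalition winning region via Lemma~\ref{dgl cut rewrite}, and finish with set algebra and monotonicity.

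First I would rewrite the complemented factor $(\winrD{\alpha}{\winrD{\beta}{Y}})^\C$. By \dGL's determinacy, $(\winrD{\beta}{Y})^\C = \winrA{\beta}{Y^\C}$, and applying determinacy once more at the $\alpha$ level, $(\winrD{\alpha}{\winrD{\beta}{Y}})^\C = \winrA{\alpha}{(\winrD{\beta}{Y})^\C} = \winrA{\alpha}{\winrA{\beta}{Y^\C}}$. Hence the left-hand side of the claimed inclusion equals $\winrD{\alpha}{\winrA{\beta^{-d}}{X\cap Y} \cup \winrD{\beta}{Y}} \cap \winrA{\alpha}{\winrA{\beta}{Y^\C}}$, which is now in the shape ``Angel region $\cap$ Demon region'' for the game $\alpha$.

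Next, Lemma~\ref{dgl cut rewrite} applies to these two winning regions for $\alpha$ (intersection being commutative), giving the superset $\winrA{\alpha^{-d}}{\winrA{\beta}{Y^\C} \cap (\winrA{\beta^{-d}}{X\cap Y} \cup \winrD{\beta}{Y})}$. Distributing the intersection over the union inside yields $(\winrA{\beta}{Y^\C} \cap \winrA{\beta^{-d}}{X\cap Y}) \cup (\winrA{\beta}{Y^\C} \cap \winrD{\beta}{Y})$. The first disjunct is trivially contained in $\winrA{\beta^{-d}}{X\cap Y}$. For the second, Lemma~\ref{dgl cut rewrite} gives $\winrA{\beta}{Y^\C} \cap \winrD{\beta}{Y} \subseteq \winrA{\beta^{-d}}{Y^\C \cap Y} = \winrA{\beta^{-d}}{\emptyset}$, and this is empty since $\beta^{-d}$ is dual-free, so Angel cannot win from an unreachable goal by the other player's loss of a test or failure to evolve an ODE --- the same observation used for $\winrA{\beta^{-d}}{X\cap X^\C}$ in the proof of Lemma~\ref{incl_Angel}. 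Thus the argument of the outer region is a subset of $\winrA{\beta^{-d}}{X\cap Y}$, and monotonicity (Lemma~\ref{monotonicity}) of $\winrA{\alpha^{-d}}{\cdot}$ yields the desired inclusion.

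The step carrying the most weight is not a new mathematical idea but the bookkeeping: lining up the two determinacy rewrites (at the $\alpha$- and at the $\beta$-level) together with the abbreviation conventions for $\winrD{\cdot}{\cdot}$ so that both applications of Lemma~\ref{dgl cut rewrite} are genuinely licensed, and then checking that the ``cooperative vs.\ competitive collapse'' $\winrA{\beta^{-d}}{\emptyset} = \emptyset$ really does wipe out the offending disjunct; once that is in place, the remainder is the elementary set identities above applied to already-established auxiliary lemmas.
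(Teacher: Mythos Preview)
Your proof is correct and follows exactly the approach the paper intends: the paper's own proof simply states that Lemma~\ref{incl_Demon} ``works exactly the same as the one for L.~\ref{incl_Angel},'' and your argument is precisely that mirror --- two applications of determinacy to expose an Angel/Demon pair for $\alpha$, a first use of Lemma~\ref{dgl cut rewrite} to collapse to $\alpha^{-d}$, distribution, a second use of Lemma~\ref{dgl cut rewrite} on the $\beta$-level disjunct to obtain $\winrA{\beta^{-d}}{\emptyset}=\emptyset$, and monotonicity to conclude.
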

    \begin{proof}
        This proof works exactly the same as the one for L. \ref{incl_Angel} and is left as an exercise to the reader.
    \end{proof}
    With all those preliminaries proven, the actual proof can be conducted. Both equalities are shown by structural induction over the complexity of the hybrid game $\alpha$.
    \begin{enumerate}
        \item For the assignment game, test game for both Angel and Demon and the continuous game for Angel, $\winrA{\alpha}{X}{Y} = \winrA{\alpha^{-d}}{X\cap Y} \cup \winrA{\alpha}{X}$ and $\winrD{\alpha}{X}{Y} = \winrA{\alpha^{-d}}{X\cap Y}\cup \winrD{\alpha}{Y}$ can directly be proven by merging the definitions of the right-hand side winning regions and using $X\cup (X\cap Y) = X$ and $Y\cup(X\cap Y) = Y$ respectively.
        \item By definition,
        \[\winrD{x'=f(x)\& Q}{X}{Y} = \begin{aligned}[t]
            &\{\varphi(0) \in \s \;|\; \varphi(r) \in Y \text{ for all } r \text{ with } \varphi \models x'=f(x) \land Q\}\\
            & \cup \begin{aligned}[t]
                \{\varphi(0) \in \s \;|\; \varphi(r) \in X\cap Y \text{ for some } r \hspace{2.5cm}&\\
                \text{ with } \varphi \models x'=f(x) \land Q\}&
            \end{aligned}
        \end{aligned}
        \] 
        which directly corresponds to $\winrA{(x'=f(x) \& Q)^{-d}}{X\cap Y} \cup \winrD{x'=f(x) \& Q}{Y}$.
        \item $\winrA{(?Q)^{-d}}{X\cap Y} \cup \winrD{?Q}{Y} = (\winr{Q} \cap (X \cap Y)) \cup (\winr{Q}^\C \cup Y)$ by definition. Expanding the term yields $(\winr{Q} \cup \winr{Q}^\C \cup Y) \cap ((X\cap Y) \cup \winr{Q}^\C \cup Y)$. As the left-hand side of the intersection is the set of all states $\s$, the term reduces to the right-hand side. As $(X\cap Y) \cup Y = Y$, this can be transformed to $\winr{Q}^C \cup Y = \winrD{?Q}{X}{Y}$.
        \item By definition, $\winrA{\alpha\cup\beta}{X}{Y} = \winrA{\alpha}{X}{Y} \cup \winrA{\beta}{X}{Y}$. Using the IH and reordering yields $(\winrA{\alpha^{-d}}{X\cap Y} \cup \winrA{\beta^{-d}}{X\cap Y}) \cup (\winrA{\alpha}{X} \cup \winrA{\beta}{X}) = \winrA{(\alpha\cup\beta)^{-d}}{X\cap Y} \cup \winrA{\alpha\cup\beta}{X}$.
        
        By definition, \[\winrD{\alpha\cup\beta}{X}{Y} = \begin{aligned}[t]
            &(\winrD{\alpha}{X}{Y} \cap \winrD{\beta}{X}{Y})\\
            &\cup (\winrD{\alpha}{X}{Y} \cap \winrA{\alpha}{X}{Y})\\
            &\cup (\winrD{\beta}{X}{Y} \cap \winrA{\beta}{X}{Y})
        \end{aligned}\]
        Then, the IH is applied to all parts, yielding
        \[\begin{aligned}[t]
            &((\winrA{\alpha^{-d}}{X\cap Y} \cup \winrD{\alpha}{Y}) \cap (\winrA{\beta^{-d}}{X\cap Y} \cup \winrD{\beta}{Y}))\\
            &\cup ((\winrA{\alpha^{-d}}{X\cap Y} \cup \winrD{\alpha}{Y}) \cap (\winrA{\alpha^{-d}}{X\cap Y} \cup \winrA{\alpha}{X}))\\
            &\cup ((\winrA{\beta^{-d}}{X\cap Y} \cup \winrD{\beta}{Y}) \cap (\winrA{\beta^{-d}}{X\cap Y} \cup \winrA{\beta}{X}))
        \end{aligned}\]
        $(\winrA{\alpha^{-d}}{X\cap Y} \cup \winrD{\alpha}{Y}) \cap (\winrA{\alpha^{-d}}{X\cap Y} \cup \winrA{\alpha}{X}) = (\winrA{\alpha^{-d}}{X\cap Y} \cup (\winrA{\alpha}{X} \cap \winrD{\alpha}{Y}))$ and $(\winrA{\beta^{-d}}{X\cap Y} \cup \winrD{\beta}{Y}) \cap (\winrA{\beta^{-d}}{X\cap Y} \cup \winrA{\beta}{X}) = \winrA{\beta^{-d}} \cup (\winrA{\beta}{X}\cap \winrD{\beta}{Y})$, so with L. \ref{dgl cut rewrite} this collapses to $\winrA{\alpha^{-d}}{X\cap Y}$ and $\winrA{\beta^{-d}}{X\cap Y}$ respectively. Expanding and simplifying the term yields $\winrA{\alpha^{-d}}{X\cap Y} \cup \winrA{\beta^{-d}}{X\cap Y} \cup (\winrD{\alpha}{Y} \cap \winrD{\beta}{Y}) = \winrA{(\alpha\cup \beta)^{-d}}{X\cap Y} \cup \winrD{\alpha\cup\beta}{Y}$.
        \item $\winrA{\alpha;\beta}{X}{Y} = \winrA{\alpha}{\winrA{\beta}{X}{Y}}{\winrD{\beta}{X}{Y}}$ by definition. Using the IH twice yields $\winrA{\alpha^{-d}}{\winrA{\beta^{-d}}{X\cap Y} \cup (\winrA{\beta}{X} \cap \winrD{\beta}{Y})} \cup \winrA{\alpha}{\winrA{\beta^{-d}}{X\cap Y} \cup \winrA{\beta}{X}}$. The left-hand side of the union is equal to $\winrA{\alpha^{-d}}{\winrA{\beta^{-d}}{X\cap Y}}$ according to L. \ref{dgl cut rewrite}. The left-hand side of the union is intersected with $\s$ in the form of $\winrA{\alpha}{\winrA{\beta}{X}} \cup \winrA{\alpha}{\winrA{\beta}{X}}^\C$. By expanding, the union of $\winrA{\alpha}{\winrA{\beta^{-d}}{X\cap Y} \cup \winrA{\beta}{X}} \cap \winrA{\alpha}{\winrA{\beta}{X}} = \winrA{\alpha}{\winrA{\beta}{X}}$ by monotonictiy, and $\winrA{\alpha}{\winrA{\beta^{-d}}{X\cap Y} \cup \winrA{\beta}{X}} \cap  \winrA{\alpha}{\winrA{\beta}{X}}^\C \subseteq \winrA{\alpha^{-d}}{\winrA{\beta^{-d}}{X\cap Y}}$ by L. \ref{incl_Angel}. Consequently, the whole term can be transformed to $\winrA{\alpha^{-d}}{\winrA{\beta^{-d}}{X\cap Y}} \cup \winrA{\alpha}{\winrA{\beta}{X}} = \winrA{(\alpha;\beta)^{-d}}{X\cap Y} \cup \winrA{\alpha;\beta}{X}$. Similarly, the claim can be proven for Demon, only using L. \ref{incl_Demon} instead of L. \ref{incl_Angel}.
        \item $\winrA{\alpha^d}{X}{Y} = \winrD{\alpha}{Y}{X}$ by definition. Using the IH yields $\winrA{\alpha^{-d}}{X\cap Y} \cup \winrD{\alpha}{X} = \winrA{(\alpha^d)^{-d}}{X\cap Y} \cup \winrA{\alpha^d}{X}$ by definition of $^{-d}$ and determinacy, definition of winning region for the dual operator. The proof for Demon works similarly.
        \item By definition, 
        \[\winrA{\alpha^*}{X}{Y} = \begin{aligned}[t]
            &\bigcap\{Z \subseteq \s \;|\; X \cup \winrA{\alpha}{Z}{Z^\C} \subseteq Z\}\\
            &\cup \bigcap\{Z \subseteq \s \;|\; (X\cap Y) \cup (\winrA{\alpha}{Z}{Z} \cap \winrD{\alpha}{Z}{Z}) \subseteq Z\}
        \end{aligned}\]
        Applying the IH to $\winrA{\alpha}{Z}{Z} \cap \winrD{\alpha}{Z}{Z}$ yields $(\winrA{\alpha^{-d}}{Z} \cup \winrA{\alpha}{Z}) \cap (\winrA{\alpha^{-d}}{Z} \cup \winrD{\alpha}{Z}) = \winrA{\alpha^{-d}}{Z} \cup (\winrA{\alpha}{Z} \cap \winrD{\alpha}{Z})$. According to L. \ref{dgl cut rewrite}, this collapses to $\winrA{\alpha^{-d}}{Z}$. Additionally, $\winrA{\alpha}{Z}{Z^\C} = \winrA{\alpha}{Z}$ by L. \ref{CollapseDGL}. Consequently, the whole term can be transformed to $\bigcap\{Z\subseteq \s\;|\; X \cup \winrA{\alpha}{Z} \subseteq Z\} \cup \bigcap\{Z \subseteq \s \;|\; (X\cap Y) \cup \winrA{\alpha^{-d}}{Z} \subseteq Z\} = \winrA{(\alpha^*)^{-d}}{X\cap Y} \cup \winrA{\alpha^*}{X}$. The proof for Demon is conducted similarly.
    \end{enumerate}
    \qed
\end{proofE}

As a corollary for Lemma \ref{rewriting}, it can be stated that players will cooperate, iff both can win the game.
\begin{corollaryE}[Joint Cooperation][end]
\label{A+D}
    It holds that
    \[
    \winrA{\alpha}{X}{Y} \cap \winrD{\alpha}{X}{Y} = \winrA{\alpha^{-d}}{X\cap Y}
    \]
\end{corollaryE}
\begin{proofE}
    By L. \ref{rewriting},
    \[\winrA{\alpha}{X}{Y} \cap \winrD{\alpha}{X}{Y} = (\winrA{\alpha^{-d}}{X \cap Y} \cup \winrA{\alpha}{X}) \cap (\winrA{\alpha^{-d}}{X\cap Y} \cup \winrD{\alpha}{Y})\]
    This can then be transformed to $\winrA{\alpha^{-d}}{X \cap Y} \cup (\winrA{\alpha}{X} \cap \winrD{\alpha}{Y}) = \winrA{\alpha^{-d}}{X\cap Y}$ by L. \ref{dgl cut rewrite}.
    \qed
\end{proofE}

\begin{lemmaE}[Goals to tests conversion][all end]
    \label{test_conversion}
    Goals can be converted into tests:
    \[{\small
        \angel{\alpha}{P}{Q} \leftrightarrow \angel{\alpha; ?P ;(?Q)^d}{\top}{\top} \land \angel{\alpha;(?Q)^d; ?P}{\top}{\top}} 
    \]
    \[{\small
        \demon{\alpha}{P}{Q} \leftrightarrow \demon{\alpha; ?P ;(?Q)^d}{\top}{\top} \land \demon{\alpha;(?Q)^d; ?P}{\top}{\top} }
    \]
To make sure that Angel, resp. Demon not only wins because the other one fails the test, they have to be able to win the game for any sequence of the tests.
\end{lemmaE}
\begin{proofE}
    The claim is shown for Angel and Demon by proving that the winning regions of both sides of the equivalence are equal.
    \begin{multline*}
    \winr{ \angel{\alpha; ?P ;(?Q)^d}{\top}{\top} \land \angel{\alpha;(?Q)^d; ?P}{\top}{\top}} =\\ \winrA{\alpha;?P;(?Q)^d}{\s}{\s} \cap \winrA{\alpha;(?Q)^d;?P}{\s}{\s}
    \end{multline*}
    by definition. Simplifying the term as much as possible using the winning regions' definitions yields $\winrA{\alpha}{\winr{P}}{\winr{Q} \cup \winr{P}^\C} \cap \winrA{\alpha}{\winr{Q}^\C\cup \winr{P}}{\winr{Q}}$. Using L. \ref{rewriting} on both parts of the intersection transform the term to 
    \[(\winrA{\alpha^{-d}}{\winr{P}\cap (\winr{Q} \cup \winr{P}^\C)} \cup \winrA{\alpha}{\winr{P}}) \cap (\winrA{\alpha^{-d}}{(\winr{Q}^\C \cup \winr{P})\cap \winr{Q}} \cup \winrA{\alpha}{\winr{Q}^\C}\cup \winr{P})\]
    As both $\winr{P}\cap (\winr{Q} \cup \winr{P}^\C)$ and $(\winr{Q}^\C \cup \winr{P})\cap \winr{Q}$ are equivalent to $\winr{P} \cap \winr{Q}$, the term can be simplified to $\winrA{\alpha^{-d}}{\winr{P} \cap \winr{Q}} \cup (\winrA{\alpha}{\winr{P}} \cap \winrA{\alpha}{\winr{Q}^\C \cup \winr{P}})$. Due to L. \ref{monotonicity}, $\winrA{\alpha}{\winr{P}} \subseteq \winrA{\alpha}{\winr{Q}^\C \cup \winr{P}}$. Consequently, the term is equal to $\winrA{\alpha^{-d}}{\winr{P}\cap\winr{Q}} \cup \winrA{\alpha}{\winr{P}} = \winrA{\alpha}{\winr{P}}{\winr{Q}} = \winr{\angel{\alpha}{P}{Q}}$ by L. \ref{rewriting}. The proof for Demon works similarly.
    \qed
\end{proofE}

\section{Proof Calculus}
\label{proof_calculus}
For making practical use of \dGLsc, a proof calculus is provided in this section. The proof calculus includes two axioms for each hybrid game, one for Angel and one for Demon. They correspond to the game's semantic definitions and decompose formulas into smaller parts.

\begin{figure}[t]
    \centering
    \begin{tabularx}{\textwidth}{>{\color{gray}}l X}
        $\asA$ & $\angel{x:=e}{p(x)}{q(x)} \leftrightarrow p(e)$ \\
        $\conA$ & $\angel{x'=f(x)}{P}{Q} \leftrightarrow \exists t{\geq} 0\, \angel{x:= y(t)}{P}{Q}$ \hfill $(y' = f(y))$ \\
        $\tA$ & $\angel{?R}{P}{Q} \leftrightarrow R \land P$ \\
        $\chA$ & $\angel{\alpha\cup\beta}{P}{Q} \leftrightarrow \angel{\alpha}{P}{Q} \lor \angel{\beta}{P}{Q}$ \\
        $\seqA$ & $\angel{\alpha; \beta}{P}{Q} \leftrightarrow \angel{\alpha}{\angel{\beta}{P}{Q}}{\demon{\beta}{P}{Q}}$ \\
        $\dA$ & $\angel{\alpha^d}{P}{Q} \leftrightarrow \demon{\alpha}{Q}{P}$ \\
        $\rA$ & $\angel{\alpha^*}{P}{Q} \leftrightarrow P \lor \angel{\alpha; \alpha^*}{P}{Q}$ \\
        $\FP$ & \AxiomC{$P\lor \angel{\alpha}{R_1}\rightarrow R_1$}
        \AxiomC{$(P\land Q) \lor (\angel{\alpha}{R_2}{R_2} \land \demon{\alpha}{R_2}{R_2})   \rightarrow R_2$}
        \BinaryInfC{$\angel{\alpha^*}{P}{Q} \rightarrow R_1 \lor R_2$}
        \DisplayProof \\
        $\asD$ & $\demon{x:=e}{p(x)}{q(x)} \leftrightarrow q(e)$ \\
        $\conD$ & $\demon{x'=f(x)}{P}{Q} \leftrightarrow \forall t{\geq} 0\,\demon{x{:}{=} x(t)}{P}{Q} \lor \exists t{\geq} 0\,\angel{x{:}{=} x(t)}{P \land Q}{Q}$ \hfill$(x' = f(x))$ \\
        $\tD$ & $\demon{?R}{P}{Q} \leftrightarrow \lnot R \lor Q$ \\
        $\chD$ & $\demon{\alpha \cup \beta}{P}{Q}\leftrightarrow \big(\demon{\alpha}{P}{Q} \land \demon{\beta}{P}{Q}\big) \lor \big(\demon{\alpha}{P}{Q} \land \angel{\alpha}{P}{Q}\big) \lor \big(\demon{\beta}{P}{Q} \land \angel{\beta}{P}{Q}\big)$ \\
        $\seqD$ & $\demon{\alpha;\beta}{P}{Q} \leftrightarrow \demon{\alpha}{\angel{\beta}{P}{Q}}{\demon{\beta}{P}{Q}}$ \\
        $\rD$ & $\demon{\alpha^*}{P}{Q} \leftrightarrow (P \land Q) \lor (\angel{\alpha;\alpha^*}{P}{Q} \land \demon{\alpha;\alpha^*}{P}{Q}) \lor (Q \land \demon{\alpha; \alpha^*}{Q})$ \\
        $\MA$ & \AxiomC{$P_1 \rightarrow P_2$}
        \AxiomC{$Q_1 \rightarrow Q_2$}
        \BinaryInfC{$\angel{\alpha}{P_1}{Q_1} \rightarrow \angel{\alpha}{P_2}{Q_2}$}
        \DisplayProof\\
        $\MAtwo$ & \AxiomC{$P_1 \rightarrow P_2$}
        \AxiomC{$P_1 \land Q_1 \rightarrow \bot$}
        \BinaryInfC{$\angel{\alpha}{P_1}{Q_1} \rightarrow \angel{\alpha}{P_2}{Q_2}$}
        \DisplayProof\\
        $\detR$ & $\lnot \angel{\alpha}{P} \leftrightarrow \demon{\alpha}{\lnot P}$
    \end{tabularx}
    \caption{Proof calculus for \dGLsc}
    \label{dglA_calculus}
\end{figure}

Additionally, there is the determinacy axiom $\det$, which corresponds to Corollary \ref{determinancy} and links Angel and Demon's modality, if they have opposing goals.
The fixpoint rule \FP{} is, like the other rules, written with the premises at the top and the conclusion at the bottom. The rule characterizes $\angel{\alpha^*}{P}{Q}$ as the union of two least fixpoints. If $R_1$ is a pre-fixpoint for the first fixpoint and $R_2$ is a pre-fixpoint for the second, then $R_1\lor R_2$ holds whenever $\angel{\alpha^*}{P}{Q}$ holds. As a subtle detail, $R_1$ and $R_2$ do not need to be the same because Angel can rarely achieve the same by competing and by cooperating with Demon. Unlike axiom $\rA$, the rule does not feature more instances of repetition games in the premises. Demon's induction rule is not included because it can be derived from \FP{}.
The other two proof rules $\MA$ and $\MAtwo$ correspond to the two monotonicity lemmas. Unlike the axioms, the monotonicity rules for Demon can be derived by using $\dA$. For the monotonicity rules, it is irrelevant that this changes the game. But for Demon's axioms, the dual game introduced by $\dA$ prevents the use of any other axiom to derive them.

Additionally, the proof calculus includes all rules for first-order logic. The proof calculus' rules are listed in Fig. \ref{dglA_calculus}, except for the first-order rules which are left out for reasons of space limitations and readability.

\begin{example}
    With these proof rules on hand, it can now be proved that Angel is actually able to win the game seen in Example \ref{kayak} and gets enough orange juice to supply all her passengers but not so much that it spills everywhere.
    In the proof, the hybrid game is straightforwardly broken down using the continuous, dual and assignment axioms:
    \begin{small}
        \begin{prooftree}
        \def\fCenter{\ \vdash\ }
        \AxiomC{$*$}
        \LeftLabel{$\mathbb{R}$}
        \UnaryInf $o=0, t=0 \fCenter y+5 = 5 \land o+3=3$
        \LeftLabel{$\lor$R, wR, $\asA$, $\exists$R}
        \UnaryInf $o=0, t=0 \fCenter \begin{aligned}[t]
        \forall s{\geq}0 \demon{o:=o+s}{t+5 = 5}{o=3}\hspace{1.9cm}&\\
        \lor \exists s {\geq} 0 \angel{o=o+s}{t+5=5 \land o=3}{o=3}&
        \end{aligned}$
        \LeftLabel{$\conD$}
        \UnaryInf $o=0, t=0 \fCenter \demon{x'=1}{t+5 = 5}{o=3}$
        \LeftLabel{$\exists$R, $\asA$}
        \UnaryInf $o=0, t=0 \fCenter\begin{aligned}[t]
         \exists s{\geq}0 \angel{t:=t+s}{\demon{x'=1}{t=5}{o=3}}{\hspace{1.2cm}&\\
        \angel{x'=1}{t=5}{o=3}}&
        \end{aligned}$
        \LeftLabel{$\conA$}
        \UnaryInf $o=0; t=0 \fCenter \begin{aligned}[t]
        \angel{y'=1}{\demon{x'=1}{t=5}{o=3}}{\hspace{2.4cm}&\\
        \angel{x'=1}{t=5}{o=3}}&
        \end{aligned}$
        \LeftLabel{$\seqA$, $\dA$, $\dD$}
        \UnaryInf $o=0, t=0 \fCenter \angel{\{t'=1\}; \{o'=1\}^d}{o=3}{t=5}$
    \end{prooftree}
    \end{small}
    Notably, the point where the proof splits into two cases (cooperation or competition), only occurs in the penultimate step of the proof. In this case it is already conceivable which option is provable, and the other branch can be eliminated, but usually this is hard to see. 
    Therefore, this saves considerable effort compared to using \dGL where both cases had to be considered from the start.   
\end{example}

To be of any use, soundness of the proof calculus is crucial. Soundness means that all formulas that can be proven in the proof calculus are actually valid. If a false formula could be proven, anything could be, rendering it useless. Therefore, the soundness of the proof calculus is proven in the following theorem:
\begin{theoremE}[Soundness][end]
\label{soundness}
    The \dGLsc proof calculus is sound.
\end{theoremE}
\begin{proofE}
    The proof is done by showing that the semantics of the left-hand side and the right-hand side of all axioms match. The proof rules are shown by proving that the semantics of the premise is a subset of the semantics of the conclusion.
    \begin{enumerate}
        \item Axioms $\asA$, $\asD$:
        
            $\winr{\angel{x:=e}{p(x)}{q(x)}} = \winrA{x:=e}{\winr{p(x)}}{\winr{q(x)}} = \{\omega \in \s \;|; \omega^{\omega\winr{x}}_x \in \winr{p(x)}\}$ by definition. Instead of replacing the value of $x$ with the value of $e$ in $\omega$ and checking if the resulting state is in $\winr{p(x)}$, it can also be checked if $\omega$ is in $\winr{p(e)}$. Consequently, $\winr{\angel{x:=e}{p(x)}{q(x)}} = \winr{p(e)}$.

            The proof for $\asD$ works similarly.
        \item Axioms $\conA$, $\conD$:
        
            By definition $\winr{\exists t{\geq}0 \angel{x:=x(t)}{P}{Q}} = \{\omega\in \s \;|\; \omega^r_t \in \{\nu \in \s \;|\; \nu^{\nu\winr{x(t)}}_x \in \winr{P}\} \text{for some } r\geq 0\}$ by definition. Contracting the replacements made in $\omega$ and $\nu$ yields $\{\omega \in \s \;|\; (\omega^r_t)^{\omega^r_t\winr{x(t)}}_x \in \winr{P} \text{ for some } r\geq 0\}$. Define now the function of states $\varphi(s) = (\omega^s_t)^{\omega^r_t\winr{x(t)}}_x$ which contains all states that are equal to $\omega$ except for $t, x, x'$ and $x$ has the value $x(s)$ in state $\varphi(s)$. The condition $\varphi \models x'=f(x)$ is fulfilled because $x'=f(x)$ by assumption. Therefore, the set is a subset of $\winr{\angel{x'=f(x)}{P}{Q}} = \{\varphi(0) \in \s \;|\; \varphi(r) \in \winr{P} \text{ for some } r \text{ with } \varphi \models x'=f(x)\}$. The other inclusion holds because the solution of $s'=f(x)$ is unique as proven in \cite{Platzer10}.

            For $\conD$, 
            \[\winr{\demon{x'=f(x)}{P}{Q}} = \begin{aligned}[t]
                &\{\varphi(0) \in \s \;|\; \varphi(r) \in \winr{Q} \text{ for all } r \text{ with } \varphi \models x'=f(x)\}\\
                &\cup \{\varphi(0) \in \s \;|\; \varphi(r) \in \winr{P} \cap \winr{Q}\}
            \end{aligned} \] 
            and 
            \[\begin{aligned}[t]
            &\winr{\forall t{\geq}0 \demon{x:=x(t)}{P}{Q} \lor \exists t{\geq}0 \angel{x:=x(t)}{P\land Q}{Q}}\\
            &= \begin{aligned}[t]
                &\{\omega \in \s \;|\; (\omega^r_t)^{\omega^r_t\winr{x(t)}}_x \in \winr{Q} \text{ for all } r\geq 0\}\\
                &\cup \{\omega\in\s\;|\; (\omega^r_t)^{\omega^r_t\winr{x(t)}}_x \in \winr{P} \cap \winr{Q} \text{ for some } r \geq 0\}
            \end{aligned}
            \end{aligned}
            \] 
            It can now be proved that $\{\varphi(0) \in \s \;|\; \varphi(r) \in \winr{Q} \text{ for all } r \text{ with } \varphi \models x'=f(x)\}$ is equal to $\{\omega \in \s \;|\; (\omega^r_t)^{\omega^r_t\winr{x(t)}}_x \in \winr{Q} \text{ for all } r\geq 0\}$ and that $\{\varphi(0) \in \s \;|\; \varphi(r) \in \winr{P} \cap \winr{Q}\}$ is equal to $\{\omega\in\s\;|\; (\omega^r_t)^{\omega^r_t\winr{x(t)}}_x \in \winr{P} \cap \winr{Q} \text{ for some } r \geq 0\}$, similarly to the proof for $\conA$.
        \item 
            Axioms $\tA$, $\tD$, $\chA$, $\chD$, $\seqA$, $\seqD$, $\dA$,     $\dD$:

            These axioms can directly be proven by using the definitions of the semantics.
        \item Axiom $\rA$, $\rD$:
        
            $\winr{\angel{\alpha^*}}{P}{Q} = \winrA{\alpha^*}{\winr{P}}{\winr{Q}} = \winrA{\alpha^*}{\winr{P}} \cup \winrA{(\alpha^*)^{-d}}{\winr{P} \cap \winr{Q}}$ by L. \ref{rewriting}. As the goals are complementary for both winning regions, each of them consists of only one fixpoint because the other one is empty. Using the fixpoints' properties yields
            \[\winrA{\alpha^*}{\winr{P}} = \{Z \subseteq \s \;|\; \winr{P} \cup \winrA{\alpha}{Z, Z^\C} \subseteq Z\} = \winr{P} \cup \winrA{\alpha}{\winrA{\alpha^*}{\winr{P}}}\] 
            and \[\begin{aligned}[t]
                \winrA{(\alpha^*)^{-d}}{\winr{P} \cap \winr{Q}} &= \{Z \subseteq \s \;|\; (\winr{P} \cap \winr{Q}) \cup \winrA{\alpha^{-d}}{Z}{Z^C} \subseteq Z\}\\
                &= (\winr{P} \cap \winr{Q}) \cup \winrA{\alpha^{-d}}{\winrA{(\alpha^*)^{-d}}{\winr{P}\cap \winr{Q}}}
            \end{aligned}\] 
            Because of C. \ref{determinancy}, $\winrA{\alpha}{\winrA{\alpha^*}{\winr{P}}} = \winrA{\alpha;\alpha^*}{\winr{P}}$ and $\winrA{\alpha^{-d}}{\winrA{(\alpha^*)^{-d}}{\winr{P}\cap \winr{Q}}} = \winrA{(\alpha;\alpha^*)^{-d}}{\winr{P} \cap \winr{Q}}$. In total, this yields \[\winr{P} \cup \winrA{\alpha;\alpha^*}{\winr{P}} \cup (\winr{P}\cap \winr{Q}) \cup \winrA{(\alpha;\alpha^*)^{-d}}{\winr{P}\cap \winr{Q}}\]
            By L. \ref
            {rewriting}, $\winrA{\alpha;\alpha^*}{\winr{P}} \cup \winrA{(\alpha;\alpha^*)^{-d}}{\winr{P}\cap \winr{Q}} = \winrA{\alpha;\alpha^*}{\winr{P}}{\winr{Q}}$. $\winr{P} \cup (\winr{P}\cap \winr{Q})$ can be simplified to $\winr{P}$. Consequently, the whole term is equal to $\winr{P} \cup \winrA{\alpha;\alpha^*}{\winr{P}{\winr{Q}}} = \winr{P \lor \angel{\alpha;\alpha^*}{P}{Q}}$.

            The proof for $\rD$ at first follows the same steps as the one for $\rA$ to show that $\winr{\demon{\alpha^*}{P}{Q}} = (\winr{Q} \cap \winrD{\alpha;\alpha^*}{Q}) \cup (\winr{P} \cap \winr{Q}) \cup \winrA{(\alpha;\alpha^*)^{-d}}{\winr{P} \cap \winr{Q}}$. As $\winrA{(\alpha;\alpha^*)^{-d}}{\winr{P} \cap \winr{Q}} = \winrA{\alpha;\alpha^*}{\winr{P}}{\winr{Q}} \cap \winrD{\alpha;\alpha^*}{\winr{P}}{\winr{Q}}$ by C. \ref{A+D}, the whole term is equal to $(\winr{Q} \cap \winrD{\alpha;\alpha^*}{Q}) \cup (\winr{P} \cap \winr{Q}) \cup (\winrA{\alpha;\alpha^*}{\winr{P}}{\winr{Q}} \cap \winrD{\alpha;\alpha^*}{\winr{P}}{\winr{Q}}) = \winr{(P\land Q) \lor (\angel{\alpha;\alpha^*}{P}{Q} \land \demon{\alpha;\alpha^*}{P}{Q}) \lor (Q \land \demon{\alpha;\alpha}{Q})}$.
        \item Rule $\MA$:
        
            Assume that the two premises are valid. Consequently, $\winr{P_1} \subseteq \winr{P_2}$ and $\winr{Q_1} \subseteq \winr{Q_2}$. According to Lemma \ref{monotonicity}, this implies that $\winrA{\alpha}{\winr{P_1}}{\winr{Q_2}} \subseteq \winrA{\alpha}{\winr{P_2}}{\winr{Q_2}}$. Therefore, the conclusion $\angel{\alpha}{P_1}{Q_2} \rightarrow \angel{\alpha}{P_2}{Q_2}$ is valid.
        \item Rule $\MAtwo$:
        
            Assume that the two premises are valid. Then, $\winr{P_1} \subseteq \winr{P_2}$. Additionally, $\winr{P_1} \cap \winr{Q_1} \subseteq \emptyset$ which implies $\winr{P_1} \cap \winr{Q_1} = \emptyset$ because the other inclusion is trivial. Therefore, all preconditions of Lemma \ref{mon2} are fulfilled. According to Lemma \ref{mon2}, this implies that $\winrA{\alpha}{\winr{P_1}}{\winr{Q_1}} \subseteq \winrA{\alpha}{\winr{P_2}}{\winr{Q_2}}$ holds. Consequently, the conclusion of the proof rule holds which proves that the proof rule is valid.
        \item Rule $\detR$:
        
            This rule is a direct consequence of Corollary \ref{determinancy}.
    \end{enumerate}
    \qed
\end{proofE}

Another important property of a proof calculus is completeness. Completeness means that any valid formula can be proven in the proof calculus. But how can this property be shown for \dGLsc? For \dGL it is already known that it is relatively complete \cite{Platzer15}. This means that any valid formula can be proved if all tautologies may be assumed in the antecedent, i.e. the notion is less strict than completeness. For \dGLsc only relative completeness can be proven, as this logic builds on top of \dGL. Adapting the definitions from \cite{Platzer15}, relative completeness is defined as:

\begin{definition}[Expressiveness \cite{Platzer15}]
    A logic $L$ is \emph{expressive} (for \dGLsc), if for every formula $\phi$, there exists an equivalent formula $\phi^\flat$ of $L$, i.e. $\models \phi \leftrightarrow \phi^\flat$.
    The logic $L$ is called \emph{differentially expressive}, if it is expressive and all equivalences of the form $\angel{x'=\theta}{P}{Q} \leftrightarrow (\angel{x'=\theta}{P}{Q})^\flat$ and $\demon{x'=\theta}{P}{Q} \leftrightarrow (\demon{x'=\theta}{P}{Q})^\flat$ are provable in the proof calculus. It is assumed that the logic $L$ is closed under first-order logic. 
\end{definition}

\begin{definition}[Relative completeness \cite{Platzer15}]
    A logic is \emph{complete relative to an expressive logic $L$}, if every valid formula can be proved in the calculus from $L$ tautologies.
\end{definition}

If now, the \dGLsc proof calculus could be converted to the \dGL proof calculus, relative completeness could be proven almost for free. In the semantics this conversion is possible by Lemma \ref{rewriting}. 
The proof calculus though, does not contain any axioms that syntactically rephrase this lemma.
The inclusion of a corresponding axiom would also run counter to the intended compositionality principles of \dGLsc.
Fortunately, this is also unnecessary, as Theorem \ref{Complementarization} shows a core insight: The two complementarization axioms (C.A. 1, 2) that split one modality with two individual player goals into two modalities with complementary goals, are \emph{admissible!} Consequently, the set of provable formulas is unaffected by adding them.
Hence, similar to Gentzen's cut elimination theorem, the compositional axioms of \dGLsc prove every \dGLsc formula that the complementarization axioms would reduce to \dGL, thereby guaranteeing relative completeness for free.

\begin{theoremE}[Complementarization Elimination][end]
    \label{Complementarization}
    The complementarization axioms are admissible in the proof calculus:
    \begin{itemize}
        \item Complementarization axiom 1: $\angel{\alpha}{P}{Q} \leftrightarrow \angel{\alpha^{-d}}{(P\land Q)} \lor \angel{\alpha}{P}$
        \item Complementarization axiom 2: $\demon{\alpha}{P}{Q} \leftrightarrow \angel{\alpha^{-d}}{(P\land Q)} \lor \demon{\alpha}{Q}$
    \end{itemize}  
\end{theoremE}
\begin{proofE}
    To facilitate the proof, some admissible or derivable rules are introduced first.
    \begin{lemma}
        \label{devRules}
        The following rules are admissible or derivable in the proof calculus:
        \begin{itemize}
            \item {\makebox[\cuswidth]{$\impAD$\hfill} $\angel{\alpha}{P} \land \demon{\alpha}{Q} \rightarrow \angel{\alpha^{-d}}{(P\land Q)}$}
            \item {\makebox[\cuswidth]{$\spA$\hfill} $\angel{\alpha}{(\angel{\beta^{-d}}{(P\land Q)}\lor \angel{\beta}{P})} \land \lnot \angel{\alpha}{\angel{\beta}{P}} \rightarrow \angel{\alpha^{-d}}{\angel{\beta^{-d}}{(P\land Q)}}$}
            \item {\makebox[\cuswidth]{$\id$\hfill} $\angel{\alpha^{-d}}{(P\land Q)} \leftrightarrow \angel{\alpha^{-d}}{(P\land Q)} \lor (\angel{\alpha}{P} \land \demon{\alpha}{Q})$}
            \item {\makebox[\cuswidth]{$\re$\hfill}$\begin{aligned}[t]
                &\angel{\alpha^{-d}}{\angel{\beta^{-d}}{(P\land Q)}} \lor \angel{\alpha}{(\angel{\beta^{-d}}{(P\land Q)} \lor \angel{\beta}{P})}\\[-3pt]
                &\leftrightarrow \angel{\alpha^{-d}}{\angel{\beta^{-d}}{(P\land Q)}} \lor \angel{\alpha}{\angel{\beta}{P}}
            \end{aligned}$}
            \item {\makebox[\cuswidth]{$\reD$\hfill} $\begin{aligned}[t]
                &\angel{\alpha^{-d}}{\angel{\beta^{-d}}{(P\land Q)}} \lor \demon{\alpha}{(\angel{\beta^{-d}}{(P\land Q)} \lor \demon{\beta}{Q})}\\[-3pt]
                &\leftrightarrow \angel{\alpha}{\angel{\beta^{-d}}{(P\land Q)}} \lor \demon{\alpha}{\demon{\beta}{Q}}
            \end{aligned}$}
            \item {\makebox[\cuswidth]{$\AD$\hfill} $\angel{\alpha}{P}{Q} \land \demon{\alpha}{P}{Q} \leftrightarrow \angel{\alpha^{-d}}{(P \land Q)}$}
            \item {\makebox[\cuswidth]{$\langle^*]$\hfill} $\angel{\alpha^*}{P}{Q}\land \demon{\alpha^*}{P}{Q} \leftrightarrow (P\land Q) \lor (\angel{\alpha;\alpha^*}{P}{Q} \land \demon{\alpha;\alpha^*}{P}{Q})$}
            \item \AxiomC{$(P\land Q) \lor (\angel{\alpha}{R}{R} \land \demon{\alpha}{R}{R}) \rightarrow R$}
            \LeftLabel{$\FPtwo$}
            \UnaryInfC{$(\angel{\alpha^*}{P}{Q} \land \demon{\alpha^*}{P}{Q}) \rightarrow R$}
            \DisplayProof
            \item \AxiomC{$Q\rightarrow \demon{\alpha}{Q}$}
            \LeftLabel{$\ind$}
            \UnaryInfC{$Q \rightarrow \demon{\alpha^*}{P}{Q}$}
            \DisplayProof
        \end{itemize}
    \end{lemma}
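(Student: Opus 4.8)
The rules of Lemma~\ref{devRules} are the syntactic counterparts of semantic identities already proved: $\impAD$ and $\AD$ mirror Lemma~\ref{dgl cut rewrite} and Corollary~\ref{A+D}, $\spA$ mirrors Lemma~\ref{incl_Angel} and the Demon side of $\reD$ mirrors Lemma~\ref{incl_Demon}, while the loop rules $\langle^*]$, $\FPtwo$, $\ind$ encode the fixpoint structure of the repetition game. Since every axiom in Fig.~\ref{dglA_calculus} decomposes its modality in exactly the shape of the corresponding semantic clause, the overall strategy is to replay each of those semantic arguments one level up, at the level of formulas. Purely zero-sum bookkeeping is discharged with $\detR$ (Corollary~\ref{determinancy}), which lets one pass freely between $\angel{\alpha}{P}$, $\demon{\alpha}{P}$ and their \dGL-style complements, and every ``enlarge this set'' step of the semantic proofs becomes an application of $\MA$ or $\MAtwo$.

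First I would establish $\impAD$ by structural induction on $\alpha$, following the case split of Lemma~\ref{dgl cut rewrite}: base cases from $\asA$, $\asD$, $\tA$, $\tD$ and propositional reasoning; the ODE case from $\conA$, $\conD$, $\asA$, $\asD$ and first-order arithmetic, using uniqueness of the ODE solution to evaluate Angel's $\exists t$ and Demon's $\forall t$ along the same trajectory (the $P\land Q$-disjunct of $\conD$ then carries the joint goal directly); the choice case from $\chA$, $\chD$ by distribution; the sequential case from $\seqA$, $\seqD$ by applying the induction hypothesis to the inner then the outer modality; the dual case from $\dA$, $\dD$ with $\detR$. Rule $\AD$ is then obtained by the same induction, mirroring Corollary~\ref{A+D}; note that its forward direction cannot simply be read off $\impAD$, because the complementarization equalities of Lemma~\ref{rewriting} are exactly what Theorem~\ref{Complementarization} is proving admissible and are therefore unavailable here. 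Given $\impAD$, rule $\id$ is pure propositional logic ($A\leftrightarrow A\lor B$ when $B\to A$), and $\spA$ is obtained by replaying the proof of Lemma~\ref{incl_Angel}: use $\detR$ to rewrite $\lnot\angel{\alpha}{\angel{\beta}{P}}$ as a Demon modality, merge the two outer modalities with $\impAD$, distribute, merge the two inner ones with $\impAD$ again, discard the residual $\angel{\beta^{-d}}{(P\land\lnot P)}$-term with $\MAtwo$, and weaken to the target with $\MA$. Then $\re$ follows from $\spA$ by a classical case split on whether $\angel{\alpha}{\angel{\beta}{P}}$ holds, and $\reD$ symmetrically from the Demon analogue of $\spA$ (the syntactic form of Lemma~\ref{incl_Demon}).

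For the loop rules, $\langle^*]$ comes from applying $\AD$ to the games $\alpha^*$ and $\alpha;\alpha^*$ together with the unrolling axiom $\rA$ in its complementary-goal form: $\AD$ identifies $\angel{\alpha^*}{P}{Q}\land\demon{\alpha^*}{P}{Q}$ with $\angel{(\alpha^{-d})^*}{(P\land Q)}$, $\rA$ unrolls this to $(P\land Q)\lor\angel{\alpha^{-d};(\alpha^{-d})^*}{(P\land Q)}$, and a second use of $\AD$ folds the tail back into $\angel{\alpha;\alpha^*}{P}{Q}\land\demon{\alpha;\alpha^*}{P}{Q}$. Rule $\FPtwo$ is then an instance of the \dGL-style least-fixpoint rule for the system $\alpha^{-d}$ (recovered from $\FP$ by specializing to complementary goals, where the second fixpoint premise collapses), transported through $\AD$. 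Finally $\ind$ is the familiar \dGL coinduction pattern for Demon: $\demon{\alpha^*}{Q}$ abbreviates $\lnot\angel{\alpha^*}{\lnot Q}$ by $\detR$, so the conclusion reduces via the $P$-free disjunct of $\rD$ to $\angel{\alpha^*}{\lnot Q}\to\lnot Q$, which follows from $\FP$ with invariant $\lnot Q$ using the hypothesis $Q\to\demon{\alpha}{Q}$ and Demon monotonicity.

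The main obstacle is the repetition case in the inductions for $\impAD$ and $\AD$. Semantically (step~6 of the proof of Lemma~\ref{dgl cut rewrite}) this is where Demon's greatest fixpoint has to be bounded above by a least fixpoint and folded into Angel's induction; at the level of formulas this forces an application of $\FP$ whose invariant internalizes the opponent's fixpoint, rather than a naive use of the unrolling axioms, with all the accompanying monotone weakenings justified by $\MA$/$\MAtwo$. One must also order the sub-derivations carefully, so that $\langle^*]$, $\FPtwo$, and $\ind$ --- all of which rest on $\AD$ --- are never invoked inside the proof of $\AD$ for $\alpha^*$. Everything outside this case is routine propositional and monotonicity reasoning over the compositional axioms.
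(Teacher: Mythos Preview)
The paper does not actually prove this lemma: its entire proof reads ``The proofs for these rules are left as an exercise for the reader.'' So there is no argument to compare against, and your proposal is already far more detailed than what the paper offers.

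Your plan---lifting the semantic proofs of Lemma~\ref{dgl cut rewrite}, Lemmas~\ref{incl_Angel}/\ref{incl_Demon}, and Corollary~\ref{A+D} to the syntactic level by structural induction on $\alpha$, using the compositional axioms of Fig.~\ref{dglA_calculus} in place of the semantic clauses and $\MA$/$\MAtwo$ in place of monotone inclusions---is exactly the natural way to discharge this exercise, and the dependency analysis you give (establish $\impAD$ first, then $\id$, $\spA$, $\re$, $\reD$; establish $\AD$ by its own induction; derive $\langle^*]$, $\FPtwo$, $\ind$ only afterwards) is correct. Your caution about not circularly invoking the complementarization axioms inside the proof of $\AD$ is well placed, since Lemma~\ref{devRules} is precisely what Theorem~\ref{Complementarization} rests on.

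The one place that deserves more than a sketch is, as you note, the repetition case. For $\impAD$ the cleanest instantiation of $\FP$ takes $R_2=\bot$ and $R_1 = \demon{\alpha^*}{Q}\to\angel{(\alpha^{-d})^*}{(P\land Q)}$: the first premise splits on $P$ versus $\angel{\alpha}{R_1}$; in the $P$-branch one unfolds $\demon{\alpha^*}{Q}$ via $\rD$/$\detR$ to extract $Q$ and close by $\rA$, and in the $\angel{\alpha}{R_1}$-branch one combines with $\demon{\alpha}{\demon{\alpha^*}{Q}}$ (again from unfolding) via the inductive $\impAD$ on $\alpha$, then uses monotonicity and $\rA$. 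For the forward direction of $\AD$ on $\alpha^*$ the analogous trick works with $R_2=\angel{(\alpha^{-d})^*}{(P\land Q)}$ (whose premise closes by the inductive $\AD$ on $\alpha$), but the $R_1$-branch now has to cope with the middle disjunct of $\rD$; here it helps to first rewrite $\angel{\alpha;\alpha^*}{P}{Q}\land\demon{\alpha;\alpha^*}{P}{Q}$ via $\seqA$/$\seqD$ and the inductive $\AD$ on $\alpha$ into $\angel{\alpha^{-d}}{(\angel{\alpha^*}{P}{Q}\land\demon{\alpha^*}{P}{Q})}$ before folding into the invariant. None of this is surprising given your outline, but it is where the real work hides.
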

    \begin{proof}
        The proofs for these rules are left as an exercise for the reader.
    \end{proof}
    
    Both claims are proven simultaneously, using a structural induction over the complexity of the game $\alpha$.
    \begin{enumerate}
        \item $\angel{x:=e}{p(x)}{q(x)}$ holds iff $p(e)$ holds, by axiom $\asA$. Equivalently, $p(e)$ can be written as $(p(e) \land q(e)) \lor p(e)$. Using $\asA$ again, then yields $\angel{(x:=e)^{-d}}{p(x)\land q(x)} \lor \angel{x:=e}{p(x)}$.
        
        The proof for Demon's case works similarly by rewriting $q(e)$ to $(p(e) \land q(e)) \lor q(e)$.
        \item By $\conA$, $\angel{x'=f(x)}{p(x)}{q(x)}$ is equivalent to $\exists t{\geq}0 \angel{x:=x(t)}{p(x)}{q(x)}$. Applying $\asA$ removes all games, yielding $\exists t{\geq} 0 p(x(t))$. $p(x(t))$ can again be rewritten to $(p(x(t)) \land q(x(t))) \lor p(x(t))$. Additionally, the existential quantifier can be split in two, yielding $\exists t{\geq}0(p(x(t)) \land q(x(t))) \lor \exists t{\geq}0 p(x(t))$. Applying the axioms $\asA$ and $\conA$ backwards, transforms the formula to $\angel{(x'=f(x))^{-d}}{p(x)\land q(x)} \lor \angel{x'=f(x)}{p(x)}$.
        \item $\demon{x'=f(x)}{p(x)}{q(x)}$ is equivalent to $\forall t{\geq} 0\demon{x:=x(t)}{p(x)}{q(x)} \lor \exists t {\geq} 0 \angel{x:=x(t)}{p(x) \land q(x)}{q(x)}$. Afterward, the assignments are removed using $\asA$ and $\asD$. The result is then expanded by adding a disjunction with false, yielding $\forall t{\geq}0 q(x(t)) \lor \exists t{\geq}0 (q(t) \land \lnot q(t)) \lor \exists t {\geq} 0 (p(x) \land q(x))$. Finally, the formula is transformed to $\angel{(x'=f(x))^{-d}}{p(x)\land q(x)} \lor \demon{x'=f(x)}{q(x)}$ using the axioms $\asA$, $\asD$, $\conA$, $\conD$.
        \item By $\tA$, $\angel{?R}{P}{Q}$ is equivalent to $R \land P$. Rewriting $P$ to $(P\land Q) \lor P$, expanding the resulting formula and using $\tA$ again, yields $\angel{(?R)^{-d}}{P\land Q} \lor \angel{?R}{P}$.
        
        Similarly, $\angel{(?R)^{-d}}{P\land Q} \lor \demon{?R}{Q}$ is equivalent to $(R \land P \land Q) \lor \lnot R \lor Q$ by $\tA$, $\tD$. Expanding the formula yields $(R \lor \lnot R \lor Q) \land ((P \land Q) \lor \lnot R \lor Q)$. As the left-hand side of the conjunction is equivalent to \emph{true} and $(P\land Q) \lor Q$ is equivalent to $Q$, this can be simplified to $\lnot R \lor Q \leftrightarrow \demon{?R}{P}{Q}$ by $\tD$.
        \item $\angel{\alpha\cup \beta}{P}{Q} \leftrightarrow \angel{(\alpha\cup\beta)^{-d}}{(P\land Q)} \lor \angel{\alpha\cup \beta}P$ can directly be proven by first using $\chA$, then applying the IH to the result and using $\chA$ again.
        
        By $\chD$, \[\demon{\alpha\cup\beta}{P}{Q} \leftrightarrow \begin{aligned}[t]
            &(\demon{\alpha}{P}{Q} \land \demon{\beta}{P}{Q})\\
            &\lor (\demon{\alpha}{P}{Q} \land \angel{\alpha}{P}{Q})\\
            &\lor (\demon{\beta}{P}{Q} \land \angel{\beta}{P}{Q})
        \end{aligned}\]
        Then, the IH is applied to all parts, yielding:
        \[\begin{aligned}
            &((\angel{\alpha^{-d}}{(P\land Q)} \lor \demon{\alpha}{Q}) \land (\angel{\beta^{-d}}{(P\land Q)} \lor \demon{\beta}{Q}))\\
            &\lor ((\angel{\alpha^{-d}}{(P\land Q)} \lor \demon{\alpha}{Q}) \land (\angel{\alpha^{-d}} \lor \angel{\alpha}{P}))\\
            &\lor ((\angel{\beta^{-d}}{(P\land Q)} \lor \demon{\beta}{Q}) \land (\angel{\beta^{-d}}{(P\land Q)} \lor \angel{\beta}{P}))
        \end{aligned}\]
        $(\angel{\alpha^{-d}}{(P\land Q)} \lor \demon{\alpha}{Q}) \land (\angel{\alpha^{-d}} \lor \angel{\alpha}{P}) \leftrightarrow \angel{\alpha^{-d}}{(P\land Q)} \lor (\angel{\alpha}{P} \land \demon{\alpha}{Q})$ and $(\angel{\beta^{-d}}{(P\land Q)} \lor \demon{\beta}{Q}) \land (\angel{\beta^{-d}}{(P\land Q)} \lor \angel{\beta}{P}) \leftrightarrow \angel{\beta^{-d}}{(P\land Q)} \lor (\angel{\beta}{P} \land \demon{\beta}{Q})$, so with axiom $\id$, this can be transformed to $\angel{\alpha^{-d}}{(P\land Q)}$ and $\angel{\beta^{-d}}{(P\land Q)}$ respectively. Expanding and simplifying the formula yields $(\angel{\alpha^{-d}}{(P\land Q)} \lor \angel{\beta^{-d}}{(P\land Q)}) \lor (\demon{\alpha}{Q} \land \demon{\beta}{Q})$. Adding two disjunctions with \emph{false}, in the form of $(\demon{\alpha}{Q} \land \lnot \demon{\alpha}{Q}) \leftrightarrow (\demon{\alpha}{Q} \land \lnot \angel{\alpha}{\lnot Q})$ and $(\demon{\beta}{Q} \land \lnot \demon{\beta}{Q}) \leftrightarrow (\demon{\beta}{Q} \land \lnot \angel{\beta}{\lnot Q})$ by axiom $\det$, brings the formula in the right shape to use $\chA$, $\chD$, yielding $\angel{(\alpha \cup \beta)^{-d}}{(P\land Q)} \lor \demon{\alpha \cup \beta}{Q}$.
        \item $\angel{\alpha; \beta}{P}{Q} \leftrightarrow \angel{\alpha^{d}}{(\angel{\beta^{-d}}{(P\land Q)} \lor (\angel{\beta}{P} \land \demon{\beta}{Q}))} \lor \angel{\alpha}{(\angel{\beta^{-d}}{P \land Q} \lor \angel{\beta}{P})}$ by using $\seqA$ once and the IH twice. $\angel{\beta^{-d}}{(P\land Q)} \lor (\angel{\beta}{P} \land \demon{\beta}{Q})$ is equivalent to $\angel{\beta^{-d}}{(P\land Q)}$ by axiom $\id$. The result can then be transformed to $\angel{\beta^{-d}}{(P\land Q)} \lor \angel{\alpha}{\angel{\beta}{P}}$ using axiom $\re$. By $\det$, $\seqA$, this is equivalent to $\angel{(\alpha;\beta)^{-d}}{(P\land Q)} \lor \angel{\alpha;\beta}{P}$.
        
        The proof for Demon's case works similarly.
        \item For the dual game, the claim can be directly proven for both cases by using axiom $\dA$ and $\dD$ resp., then applying the IH and using $\dA$ and $dD$ resp., again.
        \item $\angel{\alpha^*}{P}{Q} \leftrightarrow P \lor \angel{\alpha;\alpha^*}{P}{Q}$ by $\rA$. Similarly to the proof for the sequential game, $\angel{\alpha;\alpha^*}{P}{Q}$ can be transformed to $\angel{(\alpha;\alpha^*)^{-d}}{(P\land Q)} \lor \angel{\alpha;\alpha^*}{P}$. $P$ is equivalent to $P \lor (P\land Q)$. Consequently, $\rA$ can be used again, yielding $\angel{(\alpha^{*})^{-d}}{(P\land Q)} \lor \angel{\alpha^*}{P}$.
        
        In Demon's case, $\demon{\alpha^*}{P}{Q} \leftrightarrow (P\land Q) \lor (\angel{\alpha;\alpha^*}{P}{Q} \land \demon{\alpha;\alpha^*}{P}{Q}) \lor (Q \land \demon{\alpha;\alpha^*}{Q})$ by $\rD$. By $\AD$, $\angel{\alpha;\alpha^*}{P}{Q} \land \demon{\alpha;\alpha^*}{P}{Q} \leftrightarrows \angel{(\alpha;\alpha^*)^{-d}}{(P\land Q)}$. Additionally, two conjunctions with \emph{false}, in the form of $Q \land \lnot Q$ and $\angel{(\alpha;\alpha^*)^{-d}}{Q \land  \lnot Q}$ are added. Now, the formula is in the right shape to use $\rA$, $\rD$, yielding $\angel{(\alpha^*)^{-d}}{(P\land Q)} \lor \demon{\alpha^*}{Q}$.
    \end{enumerate}
        \qed
\end{proofE}

Using Theorem \ref{completeness}, the relative completeness of \dGLsc can easily be proven:
\begin{theoremE}[Completeness][end, text link={\fullproofLocation}]
\label{completeness}
The \dGLsc proof calculus is complete relative to any differentially expressive logic $L$.
\end{theoremE}
\begin{proof}
    (Sketch) With complementarization, every formula of \dGLsc can be transformed to a formula where all modalities feature opposing goals. The complementarization axioms are admissible by Theorem~\ref{Complementarization}, so the rewrite can be mimicked using only the rules available in the \dGLsc proof calculus. For opposing goals, the rules in the proof calculus correspond to the rules of the proof calculus for \dGL which is complete relative to any differentially expressive logic $L$ \cite{Platzer15}.
    Consequently, the proof calculus for \dGLsc is also relatively complete.
\end{proof}
\begin{proofE}
    To show that the proof calculus is complete relative to any differentially expressive logic $L$, it has to be to proven that any valid formula $\varphi$ of \dGLsc can be derived from valid $L$ tautologies. Or, in signs: $\models \varphi$ implies $\vdash_L \varphi$. $\vdash_L \varphi$ means that a formula $\varphi$ can be derived from valid $L$ tautologies using the proof calculus.

    If $\varphi$ does not contain any hybrid games, then $\varphi$ is a first-order formula. Consequently, $\varphi$ is provable because it is assumed that the logic $L$ is closed under first-order connectives.

    If $\varphi$ contains hybrid games, all modalities are split in two, using the complementarization axioms. After that, $\varphi$ only contains modalities where Angel and Demon have opposing goals. As stated in Lemma \ref{CollapseDGL}, these modalities are equivalent to \dGL modalities. Additionally, the rules and axioms of the proof calculus correspond to the axioms and rules of the proof calculus for \dGL in the case that Angel and Demon have opposing goals. For the axioms, the transformation is straightforward:
    \begin{enumerate}
        \item $\angel{x:=e}{p(x)}{\lnot p(x)} \leftrightarrow p(e) \Leftrightarrow \angel{x:=e}{p(x)} \leftrightarrow p(e)$
        \item $\begin{aligned}[t]
            \angel{x'=f(x)}{P}{\lnot P} \leftrightarrow \exists{t{\geq}0}\angel{x:=x(t)}{P}{\lnot P}\hspace{4cm}\\
            \Leftrightarrow \angel{x'=f(x)}{P} \leftrightarrow \exists{t{\geq}0}\angel{x:=x(t)}{P}&
        \end{aligned}$
        \item $\angel{?R}{P}{\lnot P} \leftrightarrow R\land P\Leftrightarrow \angel{?R}{P} \leftrightarrow R\land P$
        \item $\angel{\alpha\cup\beta}{P}{\lnot P} \leftrightarrow \angel{\alpha}{P}{\lnot P} \lor \angel{\beta}{P}{\lnot P}\Leftrightarrow \angel{\alpha\cup\beta}{P} \leftrightarrow \angel{\alpha}{P} \lor \angel{\beta}{P}$
        \item $\begin{aligned}[t]
            \angel{\alpha;\beta}{P}{\lnot P} \leftrightarrow \angel{\alpha}{\angel{\beta}{P}{\lnot P}}{\demon{\beta}{P}{\lnot P}}\hspace{4.3cm}&\\
            \begin{aligned}[t]
                &\stackrel{\detR}{\Leftrightarrow} \angel{\alpha;\beta}{P}{\lnot P} \leftrightarrow \angel{\alpha}{\angel{\beta}{P}}{\lnot \angel{\beta}{P}}\\
                &\Leftrightarrow \angel{\alpha; \beta}{P}{\lnot P} \leftrightarrow \angel{\alpha}{\angel{\beta}{P}}
            \end{aligned}&
        \end{aligned}$
        \item $\angel{\alpha^d}{P}{\lnot P} \leftrightarrow \demon{\alpha}{\lnot P}{P} \Leftrightarrow \angel{\alpha^d}{P} \leftrightarrow \demon{\alpha}{P} \stackrel{\detR}{\Leftrightarrow} \angel{\alpha^d}{P} \leftrightarrow \lnot \angel{\alpha}{\lnot P}$
        \item $\angel{\alpha^*}{P}{\lnot P} \leftrightarrow P \lor \angel{\alpha;\alpha^*}{P}{\lnot P} \begin{aligned}[t]
            &\stackrel{\phantom{\seqA, \detR}}{\Leftrightarrow} \angel{\alpha^*}{P} \leftrightarrow P \lor \angel{\alpha;\alpha^*}{P}\\
            &\stackrel{\seqA, \detR}{\Leftrightarrow} \angel{\alpha^*}{P} \leftrightarrow P \lor \angel{\alpha}{\angel{\alpha^*}{P}}
        \end{aligned}$
        \item
        \hspace{5cm}
        \def\proofSkipAmount{\vskip -10pt}
        \begin{small}
            \begin{prooftree}
            \AxiomC{$P \lor \angel{\alpha}{(R)} \rightarrow R$}
                \AxiomC{$*$}
                \LeftLabel{$\bot L$}
                \UnaryInfC{$\bot \vdash \bot$}
                \LeftLabel{${\rightarrow} R$, prop. L.}
                \UnaryInfC{$\vdash \bot \land \bot \rightarrow \bot$}
                \LeftLabel{prop. L.}
                \UnaryInfC{$\vdash (P \land \lnot P) \lor (\angel{\alpha}{\bot}{\bot}  \land \demon{\alpha}{\bot}{\bot}) \rightarrow \bot$}
                \LeftLabel{\FP}
                \BinaryInfC{$\vdash \angel{\alpha^*}{P}{\lnot P} \rightarrow( R_1 \lor R_2) \lor \bot$}
                \LeftLabel{prop. L.}
                \UnaryInfC{$\vdash \angel{\alpha^*}{P}{\lnot P} \rightarrow R \lor \bot$}
        \end{prooftree}    
        \end{small}
        $\Leftrightarrow$
        \AxiomC{$P \lor \angel{\alpha}{R} \rightarrow R$}
        \LeftLabel{FP}
        \UnaryInfC{$\angel{\alpha}{P} \rightarrow R$}
        \DisplayProof

        To show the transformation for the fixpoint rule, $R_1 \lor R_2$, $R_2$ is instantiated with \emph{false} and $R_1$ is renamed to $R$. The right branch resulting from using the fixpoint rule, can be closed due to \emph{false} in the antecedent, while the other branch is the premise of the \dGL fixpoint rule.
        \item \detR:
        The rule \detR $\lnot \angel{\alpha}{P} \leftrightarrow \demon{\alpha}{\lnot P}$ is the same in both \dGLsc and \dGL.

        \item $\MAtwo$
        
        \AxiomC{$\vdash P \rightarrow Q$}
        \AxiomC{$*$}
        \LeftLabel{$\bot L$}
        \UnaryInfC{$\bot \vdash \bot$}
        \LeftLabel{$\rightarrow R$}
        \UnaryInfC{$\vdash P \land \lnot P \rightarrow \bot$}
        \LeftLabel{$\MAtwo$}
        \BinaryInfC{$\vdash \angel{\alpha}{P} \rightarrow \angel{\alpha}{Q}$}
        \DisplayProof
        $\Leftrightarrow$
        \AxiomC{$\vdash P \rightarrow Q$}
        \LeftLabel{$M\langle\rangle$}
        \UnaryInfC{$\vdash \angel{\alpha}{P} \rightarrow \angel{\alpha}{Q}$}
        \DisplayProof

        The second monotonicity rule can be collapsed to the monotonicity rule in \dGL, as the additional second branch can be trivially closed if Angel and Demon have opposing goals.
    \end{enumerate}
    
    Now, it can be seen that all proof rules for \dGL are included in the proof calculus of \dGLsc. It will therefore be omitted to transform the rest of the rules in the \dGLsc proof calculus because having more rules does not change the result and these rules can either be deduced from the rules above (all rules for Demon) or become trivial (the first monotonicity rule).

    In \cite{Platzer15}, Platzer already proves that \dGL is relatively complete to any differentially expressive logic $L$. As any \dGLsc formula can be transformed into a formula that is equivalent to a \dGL formula and the rules of the proof calculus are also equivalent to the \dGL rules, the exact same proof can be conducted. Consequently, the \dGLsc proof calculus is also relatively complete to any differentially expressive logic $L$.
    \qed
\end{proofE}

\section{Conclusion}

This paper introduces the logic \dGLsc for reasoning about hybrid games with two players that have individual goals. \dGLsc helps determine if a player is able to achieve their goal after playing a given game while taking into account the goal of the other player to \emph{compete or collaborate if needed}. Additionally, the paper introduces the notion of ``semi-competitiveness'' to characterize the players' behavior.

Syntax, semantics and a proof calculus for \dGLsc are given in this paper. Monotonicity is shown to hold for \dGLsc and the winning regions of \dGLsc coincide with those of zero-sum \dGL for complementary goals.
Furthermore, complementarization has been proven admissible, illustrating the semi-competitive behavior of the players and justifying that the semantics is defined suitably.

In the future, several extensions of \dGLsc would be interesting. First, players might behave suboptimally by not always cooperating when necessary. In this case, players would need to actively agree on cooperating. As zero-sum three player games can be reduced to non-zero sum two-player games \cite{vonNeumannMorgenstern+2004}, \dGLsc could also be extended to a three-player logic. This logic could then be used for safety proofs of more complex scenarios with more players like overtaking maneuvers.

\paragraph{Acknowledgements.}
This work was funded by an Alexander von Humboldt Professorship and by KiKIT, the Pilot Program for Core-Informatics at the KIT of the Helmholtz Association.

\bibliography{refs}{}
\bibliographystyle{splncs04}
\appendix
\section{Appendix}
\begin{definition}[\dGLsc Semantics]
    \label{semantics}
    The semantics of \dGLsc formulas is defined as the set of states in which they are true as follows:
    \begin{itemize}
        \item $\winr{e\geq \Tilde{e}} = \{ \omega \in \s \;|\; \omega \winr{e} \geq \omega\winr{\Tilde{e}}\}$ where $\omega \winr{e}$ is the evaluation of term $e$ in the state $\omega$.
        \item $\winr{\lnot P} = \s\setminus \winr{P} = \winr{P}^\C$
        \item $\winr{P \land Q} = \winr{P} \cap \winr{Q}$
        \item $\winr{\forall x P} = \{\omega \in \s \;|\; \omega_x^r \in \winr{P} \text{ for all } r \in \mathbb{R}\}$
        \item $\winr{\exists x P} = \{ \omega \in \s \;|\; \omega_x^r \in \winr{P} \text{ for some } r \in \mathbb{R}\}$
        \item $\winr{\angel{\alpha}{P}{Q}} = \winrA{\alpha}{\winr{P}}{\winr{Q}}$
        \item $\winr{\demon{\alpha}{P}{Q}} = \winrD{\alpha}{\winr{P}}{\winr{Q}} $
    \end{itemize}
    where $e, \Tilde{e}$ are terms, $P,Q$ are formulas, $x$ is a variable and $\alpha$ is a hybrid game.
    The function $\winrA{\alpha}{\cdot}{\cdot}$ is defined inductively as:
    \begin{itemize}
        \item $\winrA{x:=e}{X}{Y} = \{\omega \in \s \;|\; \omega_x^{\omega\winr{e}} \in X\}$
        \item $\winrA{x'=f(x) \& Q}{X}{Y} = \{\varphi(0) \;|\; \varphi(r) \in X \text{ for some } r \text{ with } \varphi \models x'=f(x) \land Q\}$
        \item $\winrA{?Q}{X}{Y} = \winr{Q} \cap X$
        \item $\winrA{\alpha \cup \beta}{X}{Y}=\winrA{\alpha}{X}{Y} \cup \winrA{\beta}{X}{Y}$
        \item $\winrA{\alpha;\beta}{X}{Y}=\winrA{\alpha}{\winrA{\beta}{X}{Y}}{\winrD{\beta}{X}{Y}}$
        \item $\winrA{\alpha^d}{X}{Y} = \winrD{\alpha}{Y}{X}$
        \item $\begin{aligned}[t]
            \winrA{\alpha^*}{X}{Y} &=\begin{aligned}[t]
                &\bigcap\{Z\subseteq\s\;|\; X \cup \winrA{\alpha}{Z}{Z^\C} \subseteq Z\} \\&\cup \bigcap\{Z\subseteq\s\;|\; (X\cap Y) \cup (\winrA{\alpha}{Z}{Z}\cap \winrD{\alpha}{Z}{Z})\subseteq Z\}
            \end{aligned}
        \end{aligned}$
    \end{itemize}
    The function $\winrD{\alpha}{\cdot}{\cdot}$ is defined as:
    \begin{itemize}
        \item $\winrD{x:=e}{X}{Y} = \{\omega \in \s \;|\; \omega_x^{\omega\winr{e}} \in Y\}$
        \item $\begin{aligned}[t]
            \winrD{x'=f(x) \& Q}{X}{Y} &= \begin{aligned}[t]
                &\{\varphi(0) \in \s\;|\; \varphi(r) \in Y \text{ for all } r \text{ with } \varphi \models x'=f(x) \land Q\}\\
                &\cup \begin{aligned}[t]
                    &\{\varphi(0) \in \s \;|\; \varphi(r) \in X \cap Y \\
                    &\text{ for some } r \text{ with } \varphi \models x'=f(x) \land Q\}
                \end{aligned}
            \end{aligned}
        \end{aligned}$
        \item $\winrD{?Q}{X}{Y} = \winr{Q}^\C \cup Y$
        \item $\begin{aligned}[t]
            \winrD{\alpha \cup \beta}{X}{Y} &= \begin{aligned}[t]
                &(\winrD{\alpha}{X}{Y} \cap \winrD{\beta}{X}{Y})\\
                &\cup (\winrD{\alpha}{X}{Y} \cap \winrA{\alpha}{X}{Y})\\
                &\cup (\winrD{\beta}{X}{Y} \cap \winrA{\beta}{X}{Y})
            \end{aligned} 
        \end{aligned}$
        \item $\winrD{\alpha;\beta}{X}{Y} = \winrD{\alpha}{\winrA{\beta}{X}{Y}}{\winrD{\beta}{X}{Y}}$
        \item $\winrD{\alpha^d}{X}{Y} = \winrA{\alpha}{Y}{X}$
        \item $\begin{aligned}[t]
            \winrD{\alpha^*}{X}{Y} &= \begin{aligned}[t]
                &\bigcup\{Z\subseteq\s \;|\; Z \subseteq Y \cap \winrD{\alpha}{Z^\C}{Z}\}\\& \cup \bigcap\{Z\subseteq\s\;|\; (X\cap Y) \cup (\winrA{\alpha}{Z}{Z}\cap \winrD{\alpha}{Z}{Z})\subseteq Z\}
            \end{aligned}   
          \end{aligned}$
    \end{itemize}
\end{definition}
\printProofs
\end{document}